\newtheorem{theorem}{Theorem}
\newtheorem{defn}{Definition}
\newtheorem{remark}{Remark}
\newtheorem{claim}{Claim}[theorem]
\newtheorem{corollary}{Corollary}
\newtheorem{obs}{Observation}
\newtheorem{lemma}{Lemma}
\theoremstyle{plain}
\newcommand{\lab}{{\sf label}}
\newcommand{\lin}{\ensuremath{\equiv_{\sf lin}}}
\newcommand{\aff}{\sf aff}
\newcommand{\vd}{{\sf VD}}
\newcommand{\var}{{\sf var}}
\newcommand{\proj}{{\sf proj}}
\newcommand{\homo}{{\sf homo}}
\newcommand{\ed}{{\sf RED_S}}
\newcommand{\pit}{{\sf PIT}}
\newcommand{\vdequiv}{{\sf VD-EQUIV }}
\title{Linear Projections of the Vandermonde Polynomial}
\author{ C. Ramya   \and
  B. V. Raghavendra Rao}
  \date{  {Department of Computer Science and Engineering} \\
IIT Madras, Chennai, INDIA \\
\texttt{ \{ramya,bvrr\}@cse.iitm.ac.in}\\
}
\begin{document}

\maketitle

\begin{abstract}
An $n$-variate Vandermonde polynomial is the determinant of the $n\times n$ matrix where the $i$th column is the vector $(1, x_i, x_i^2, \dots, x_i^{n-1})^T$. Vandermonde  polynomials play a crucial role in the theory of alternating polynomials and    occur in Lagrangian   polynomial interpolation as well as in  the theory of   error correcting codes.   In this work we study structural and computational aspects of  linear projections of  Vandermonde polynomials.   

Firstly, we consider the problem of testing if a given polynomial is  linearly equivalent to the Vandermonde polynomial. We obtain a deterministic polynomial time algorithm   to test if $f$ is linearly equivalent to the Vandermonde polynomial  when $f$ is given as product of linear factors. In the case when $f$ is given as a black-box  our algorithm runs in   randomized polynomial time. Exploring the structure of projections of Vandermonde polynomials further,   we  describe the group of symmetries of a Vandermonde polynomial and show that the associated Lie algebra is simple. 
Finally, we study arithmetic circuits built over projections of  Vandermonde polynomials. We show universality property for some of the models and obtain  a lower bounds against sum of projections of Vandermonde determinant. 

\end{abstract}

\section{Introduction}
\label{sec:intro}
The  $n\times n$ symbolic Vandermonde matrix  is given by
\begin{equation}
\label{eq:eq1}
V =  \left[ {\begin{array}{ccccc} 
 1 & 1 & 1 & 1 & 1 \\
 x_1 & x_2 & \cdots & \cdots & x_n\\
x_1^2 & x_2^2 & \cdots &\cdots & x_n^2\\
\vdots & \vdots & \vdots & \vdots & \vdots \\
\vdots & \vdots & \vdots & \vdots & \vdots \\
x_1^{n-1} & x_2^{n-1} & \cdots &\cdots & x_n^{n-1}\\
\end{array} } \right]
\end{equation}
where $x_1,\ldots, x_n$ are variables.  The determinant of the symbolic  Vandermonde matrix is a homogeneous polynomial of degree $\binom{n}{2}$ given by ${\sf \vd_n(x_1,\ldots,x_n)} \triangleq {\sf det}( V) = \prod_{i<j}(x_i-x_j)$ and is known as the $n$-variate {\em Vandermonde polynomial}. An {\em alternating polynomial} is one that changes sign when any two variables of $\{x_1,\ldots,x_n\}$ are swapped.  Vandermonde polynomials are central to the theory of alternating polynomials. In fact, any alternating polynomial is divisible by the Vandermonde polynomial~\cite{Mat,MD98}.  Further, Vandermonde matrix and Vandermonde polynomial occur very often in the theory of error correcting codes and are useful in Lagrangian interpolation. 

Linear projections are the most important form of reductions in Algebraic Complexity Theory developed by Valiant~\cite{Val79}. Comparison between classes of polynomials in Valiant's theory depends on the types of linear projections. (See \cite{Bur} for a detailed exposition.)  Taking a geometric view on linear projections of polynomials, Mulmuley and Shohoni~\cite{MS01} proposed the study of geometry of orbits of polynomials under the action of ${\sf GL}(n,\mathbb{F})$, i.e, the group of $n\times n$ non-singular matrices over $\mathbb{F}$.  This lead to the development of Geometric Complexity Theory, whose primary objective is to classify families of polynomials based on the geometric and representation theoretic structure of their  ${\sf GL}(n,\mathbb{F})$ orbits. 

In this article, we investigate   computational and  structural  aspects of linear projections of the family  $\vd = (\vd_n)_{n\ge 0}$ of Vandermonde polynomials over the fields of real and complex numbers. 
 
Firstly, we consider the polynomial equivalence problem when one of the polynomials is fixed to be the  Vandermonde polynomial. Recall that, in the  polynomial equivalence problem ({\sf POLY-EQUIV})  given a pair of polynomials $f$ and $g$ we ask if $f$ is equivalent to $g$ under a non-singular linear change of variables, i.e., is there a $A\in  {\sf GL}(n,\mathbb{F})$ such that $f(AX) = g(X)$, where $X= (x_1,\ldots, x_n)$? {\sf POLY{-}EQUIV} is one of the fundamental computational problems over polynomials and received significant attention in the literature. 

{\sf POLY-EQUIV} can be solved in {\sf PSPACE} over reals~\cite{JC88} and any  algebraically closed field~\cite{Sax06}, and is  in $\NP\cap {\sf co}\mbox{-}{\sf AM}$~\cite{Thi98} over finite fields.  However,  it is not known if the problem is even decidable over the field of  rational numbers~\cite{Sax06}.    Saxena~\cite{Sax06} showed that ${\sf POLY\text{-}EQUIV}$  is at least as hard as the graph isomorphism problem even in the case of   degree three forms.  Given the lack of progress on the general problem, authors have focussed on special cases over the recent years. Kayal~\cite{Kayal11} showed that testing if a given polynomial $f$ is linearly equivalent to the elementary symmetric polynomial, or to the power symmetric polynomial can be done in randomized polynomial time. Further, in~\cite{Kayal12}, Kayal obtained randomized polynomial time  algorithms for {\sf POLY-EQUIV} when one of the polynomials is either the determinant or permanent and the other polynomial is given as a black-box. 
 We consider the problem of testing equivalence to Vandermonde polynomials: 

\begin{center}
\begin{minipage}{11.95 cm}
\textbf{Problem : }{\sf VD-EQUIV}\\
\textbf{Input :} $f\in\mathbb{F}[x_1,\ldots,x_n]$\\
\textbf{Output :} Homogeneous linearly independent linear forms $L_1,L_2,\ldots,L_n$ such that $f=\vd( L_1,L_2,\ldots,L_n )$  if they exist, else output \texttt{`No such equivalence exists'}.
\end{minipage}
\end{center} 
 
\begin{remark}
Although Vandermonde polynomial is a special form of determinant, randomized polynomial time algorithm to test equivalence to determinant polynomial due to \cite{Kayal12} does not directly give an algorithm for {\sf VD-EQUIV}.  
\end{remark}
  
We show that {\sf VD-EQUIV} can be solved in deterministic polynomial time when $f$ is given as a product of linear factors (Theorem~\ref{thm:vdequiv}). Combining this with Kaltofen's factorization algorithm, \cite{Kal89},  we get a randomized polynomial time algorithm for {\sf VD-EQUIV} when $f$ is given as a black-box.


 For an $n$-variate polynomial $f\in \mathbb{F}[x_1,\ldots, x_n]$, the group of symmetry $\mathscr{G}_f$ of $f$ is the set of non-singular matrices that fix the polynomial $f$. The group of symmetry of a polynomial  and the associated Lie algebra  have significant importance in geometric complexity theory.  More recently, Kayal~\cite{Kayal12} used the structure of Lie algebras of permanent and determinant in his algorithms for special cases of {\sf POLY-EQUIV}. Further, Grochow~\cite{Gro12} studied the problem of testing conjugacy of matrix Lie algebras. In general, obtaining a complete description of group of symmetry and the associated Lie algebra of a given family of polynomials is an interesting task. 

 In this paper we obtain a description of the group of symmetry for Vandermonde polynomials (Theorem~\ref{thm:group-of-sym}). Further, we show that the associated Lie algebra for Vandermonde polynomials is simple (Lemma~\ref{lem:lie-vand}).

Finally, we explore linear projections of Vandermonde polynomials as a computational model.  We prove closure properties (or lack of) and lower bounds for representing a polynomial as sum of projections of  Vandermonde polynomials (Section~\ref{sec:models}).
 

%
%
%

%

\section{Preliminaries} 
\label{sec:prelim}
Throughout the paper, unless otherwise stated, $\mathbb{F}\in \{\mathbb{C}, \mathbb{R}\}$.  
We briefly review different types of projections of polynomials that are useful for the article. For a more detailed exposition, see~\cite{Bur}.

\begin{defn}{\em $($Projections$)$.}
Let $f,g\in \mathbb{F}[x_1,x_2,\ldots,x_n]$.   We say that $f$ is projection reducible to $g$ denoted by $f \le g$, if there are linear forms $\ell_1,\ldots, \ell_n \in \mathbb{F}[x_1,\ldots, x_n]$ such that $f = g(\ell_1, \ldots, \ell_n)$. Further, we say 
 \begin{itemize}
\item $f \le_{\proj} g$ if $\ell_1,\ldots, \ell_n \in \mathbb{F}\cup \{x_1,\ldots, x_n\}.$
\item $f \le_{\homo} g $ if $\ell_1,\ldots, \ell_n$ are homogeneous linear forms.
\item $f \le_{\aff} g$ if $\ell_1,\ldots, \ell_n$ are affine linear forms.
 \end{itemize}
\end{defn}

Based on the types of projections, we consider the following classes of polynomials that are projections of the Vandermonde polynomial.
\begin{align*}
\vd &= \{\vd(x_1,x_2,\ldots,x_n)\mid n\geq 1\}; \text{ and } \\
\vd_{\proj} &= \{ \vd(\rho_1,\rho_2,\ldots,\rho_n) \mid \rho_i \in (X\cup \mathbb{F}), \forall i\in[n] \}; \text{ and } \\
\vd_{\homo} &= \{\vd(\ell_1,\ell_2,\ldots,\ell_n) \mid \ell_i \in \mathbb{F}[x_1,x_2,\ldots,x_n],~  \deg(\ell_i)=1,~ \ell_i(0)=0~~ \forall i\in[n]  \};\text{ and }\\
\vd_{\aff} &= \{\vd(\ell_1,\ell_2,\ldots,\ell_n) \mid \ell_i \in \mathbb{F}[x_1,x_2,\ldots,x_n],~  \deg(\ell_i)\leq 1~~ \forall i\in[n]  \};
\end{align*}
 Among the different types mentioned above, the case when $\ell_1,\ldots, \ell_n$ are homogeneous and linearly independent is particularly interesting. 
Let $f,g\in \mathbb{F}[x_1,\ldots, x_n]$. $f$ is said to be linearly equivalent to $g$ (denoted by $f \lin g$) if  $g \le_{\homo} f$ via a set of linearly independent homogeneous linear forms $\ell_1,\ldots, \ell_n$. In the language of invariant theory, $f \lin g$ if and only if $g$ is in the ${\sf GL}(n,\mathbb{F})$ orbit of $f$.

 The group of symmetry of a polynomial is one of the fundamental objects associated with a polynomial: 
 \begin{defn}
 	Let $f\in\mathbb{F}[x_1,x_2,\ldots,x_n]$. The group of symmetries of $f$ ( denoted by  $\mathscr{G}_f$) is  defined as: 
    $$ \mathscr{G}_f  = \{A~|~ A \in {\sf GL}(n, \mathbb{F}), f(A {\sf x}) = f({\sf x})\}. $$ 	
  i.e., 	the group  of invertible $n\times n$ matrices $A$ such that $f(A \mathbf{x})=A(\mathbf{x})$. 
 \end{defn}

The  Lie algebra of a polynomial $f$ is the tangent space of $\mathscr{G}_f$ at the identity matrix and is defined as follows: 
 \begin{defn}[\cite{Kayal12}]
 	Let $f\in\mathbb{F}[x_1,x_2,\ldots,x_n]$. Let $\epsilon$ be a formal variable with $\epsilon^2=0$. Then $\mathfrak{g}_f$ is defined to be the set of all matrices $A\in\mathbb{F}^{n\times n}$ such that \begin{center}
 		$f((\mathbf{1}_n+\epsilon A)\mathbf{x})=f(\mathbf{x})$.
 	\end{center}
 \end{defn}

It can be noted that $\mathfrak{g}_f$ is non-trivial only when $\mathscr{G}_f$ is a continuous group. For a random polynomial, both $\mathscr{G}_f$ as well as $\mathfrak{g}_f$  are trivial.  

For a polynomial $f\in \mathbb{F}[x_1,\ldots,x_n]$, $k\ge 0$, let $\partial^{=k}( f)$ denote the $\mathbb{F}$-linear span of the set of all partial derivatives of $f$ of order $k$, i.e.,
$$\partial^{=k} (f) \triangleq \mathbb{F}\mbox{-}{\sf Span }\left\{  \frac{ \partial^k f}{\partial x_{i_1}\dots \partial x_{i_k}}~\vline~ i_1,\ldots, i_k \in [n] \right\}. $$

\section{Testing Equivalence to Vandermonde Polynomials}

%

Recall the problem $\vdequiv$ from Section \ref{sec:intro}. In this section, we obtain an  efficient algorithm for $\vdequiv$. The complexity of the algorithm depends on the input representation of the polynomials.  When the polynomial is given as product of linear forms, we show:
\begin{theorem}
\label{thm:vdequiv}
There is a deterministic polynomial time algorithm for  $\vdequiv$   when the input polynomial  $f$ is given as a  product of  homogeneous linear forms.
\end{theorem}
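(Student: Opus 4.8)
The plan is to work directly with the given factorization and to recover the ``braid/simplex'' structure of the underlying hyperplane arrangement. Write the input as $f=\ell_1\cdots\ell_m$ with the $\ell_k$ homogeneous linear forms; since $\mathbb F[x_1,\dots,x_n]$ is a UFD, the $\ell_k$ are unique up to order and nonzero scalars. If $f\lin\vd_n$, say $f=\vd_n(L_1,\dots,L_n)=\prod_{i<j}(L_i-L_j)$ with $L_1,\dots,L_n$ linearly independent, then necessarily $m=\binom n2$, the $\ell_k$ are pairwise non-proportional ($\vd_n$ is squarefree), and the lines $\langle\ell_1\rangle,\dots,\langle\ell_m\rangle$ are exactly the lines $\langle L_i-L_j\rangle$, which span the $(n-1)$-dimensional space $U=\{\sum_i a_iL_i:\sum_i a_i=0\}$. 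So the first step is to verify $m=\binom n2$, pairwise non-proportionality of the $\ell_k$, and $\dim\langle\ell_1,\dots,\ell_m\rangle=n-1$, and to reject otherwise; these are routine rank and proportionality checks. (The cases $n\le 2$ are immediate; assume $n\ge 3$ below.)

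The combinatorial heart is to label the lines $\langle\ell_k\rangle$ by the edges of $K_n$. I would first record three facts about the configuration $\{L_i-L_j\}$ (i.e.\ about the braid arrangement), which are ${\sf GL}(n,\mathbb F)$-invariant and hence hold for any $f\lin\vd_n$, when read through the (unknown) bijection between lines and edges: (a) a triple of the forms is linearly dependent iff the corresponding edges of $K_n$ form a triangle; (b) two of the forms are \emph{concurrent} --- their $2$-dimensional span contains a third of the forms --- iff the corresponding edges share a vertex, and in that case the third form is unique; (c) consequently a pairwise-concurrent linearly independent family of the forms corresponds to the edge-star at a single vertex, hence has size at most $n-1$. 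Using (b) and (c): find any concurrent pair $\ell_a,\ell_b$ (possible when $f\lin\vd_n$, since $K_n$ has triangles), let $\ell_c$ be the unique third form in their span, and put $B:=\{\ell_a,\ell_b\}\cup\{\ell:\ell\text{ concurrent with both }\ell_a,\ell_b\}\setminus\{\ell_c\}$. By (a)--(c), if $f\lin\vd_n$ then $B$ is the edge-star at some vertex, which we \emph{name} $n$; check $|B|=n-1$ and $B$ linearly independent, and reject if not. Write $B=\{b_1,\dots,b_{n-1}\}$, to play the role of $x_1-x_n,\dots,x_{n-1}-x_n$. For each pair $i<j$ in $[n-1]$, the form for edge $\{i,j\}$ must be the unique third form in $\langle b_i,b_j\rangle$; compute these $\binom{n-1}2$ forms and reject unless they are precisely the forms outside $B$, each occurring once. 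Note we never need the ``true'' bijection: all edge-stars of $K_n$ are equivalent under relabelling the $L_i$, so any star found is a legitimate starting point.

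It remains to fix scalars. Fix a representative of each $b_i$ and set $M_{in}:=b_i$; for $i=2,\dots,n-1$ let $\lambda_i$ be the unique nonzero scalar with $M_{1n}-\lambda_iM_{in}$ proportional to the form assigned to edge $\{1,i\}$ (well defined by (b), as the three lines in that pencil are distinct), and redefine $M_{in}:=\lambda_iM_{in}$. Now \emph{verify}, for every $1<i<j\le n-1$, that $M_{in}-M_{jn}$ is proportional to the form assigned to edge $\{i,j\}$, and reject if not. After this, $M_{in}-M_{jn}=c_{ij}\,\ell_{\sigma(i,j)}$ for a known nonzero $c_{ij}$, for all $1\le i<j\le n$ (with $M_{nn}:=0$), and $\{i,j\}\mapsto\ell_{\sigma(i,j)}$ is a bijection onto $\ell_1,\dots,\ell_m$. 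Choose any homogeneous linear form $w\notin U$ and set $\widetilde L_i:=M_{in}+w$ for $i<n$, $\widetilde L_n:=w$; these are linearly independent homogeneous forms with $\widetilde L_i-\widetilde L_j=M_{in}-M_{jn}$, so $\prod_{i<j}(\widetilde L_i-\widetilde L_j)=C\cdot f$ with $C:=\prod_{i<j}c_{ij}$ explicitly computed. Finally output $L_i:=t\,\widetilde L_i$ where $t^{\binom n2}=1/C$; then $L_1,\dots,L_n$ are linearly independent and $\vd_n(L_1,\dots,L_n)=f$. Over $\mathbb C$ such a $t$ always exists; over $\mathbb R$ it exists unless $\binom n2$ is even and $C<0$, a case in which (after also accounting for the effect of negating one $L_i$, which scales the product by $(-1)^{n-1}$) no real equivalence exists.

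Soundness of acceptance is immediate: if forms are output, they are by construction linearly independent and satisfy $f=\vd_n(L_1,\dots,L_n)$. The delicate direction --- and the main obstacle --- is completeness, namely showing that whenever $f\lin\vd_n$ \emph{all} the combinatorial and scalar checks succeed, together with establishing (a)--(c). Concretely, one must argue that the skeleton of the braid arrangement is rigid enough that the greedy reconstruction of the star, the pencil matching, and the scalar propagation are forced (up to the single global scalar $t$ and a relabelling of indices), so that a product of $\binom n2$ linear forms that is \emph{not} equivalent to $\vd_n$ must fail one of the explicit verification steps. Every step is linear algebra on ${\rm poly}(n)$ forms of dimension at most $n$, so the algorithm runs in deterministic polynomial time.
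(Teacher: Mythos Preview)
Your approach is correct and takes a genuinely different route from the paper. The paper's Algorithm~\ref{algo-vdproj} picks an \emph{arbitrary} linearly independent set $T^{(0)}=\{r_1,\dots,r_{n-1}\}\subset S$ and then iteratively closes it under $a\pm b$ (intersected with $S$); the correctness argument (Claims~\ref{claim:one vertex} and~\ref{cl:reach}) is a graph--connectivity argument showing that this closure eventually recovers all the ``vertex labels'' $L_1,\dots,L_{n-1}$ and hence all of $S$. You instead exploit the concurrency structure of the braid arrangement directly: facts (a)--(c) let you isolate an edge-star of $K_n$ in one shot, after which the remaining edges and the scalars are forced. Your computation of $B$ is right: the lines concurrent with both $\ell_a\sim\{i,j\}$ and $\ell_b\sim\{j,k\}$ are exactly those through $j$ together with $\ell_c\sim\{i,k\}$, so removing $\ell_c$ leaves the star at $j$. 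The completeness direction you flag as ``the main obstacle'' is in fact routine once (a)--(c) are in hand: linear dependencies among the $L_i-L_j$ are precisely the cycles of $K_n$, so a dependent triple is a triangle and a $2$-plane through two adjacent edges contains exactly one further line; a pairwise-concurrent independent family is a triangle-free pairwise-intersecting edge set, hence a star.

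What each approach buys: the paper's closure procedure is agnostic to which basis it starts from but tacitly treats the input factors as \emph{equal} to the $L_i-L_j$ (the step $T^{(i)}\leftarrow(T^{(i-1)}\cup D_i)\cap S$ is a set-equality test of linear forms), so it does not address scalar normalisation at all. Your reconstruction works at the level of projective lines throughout and then fixes scalars explicitly, which is cleaner and actually outputs the forms $L_1,\dots,L_n$ rather than only a yes/no verdict.

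One small correction over $\mathbb R$: if $\binom n2$ is even and $C<0$ you are \emph{not} stuck. Swapping two of the $\widetilde L_i$ multiplies $\vd_n(\widetilde L_1,\dots,\widetilde L_n)$ by $-1$, replacing $C$ by $-C>0$, after which a real $t$ with $t^{\binom n2}=1/(-C)$ exists. (Negating a single $\widetilde L_i$ does \emph{not} scale $\vd_n$ by $(-1)^{n-1}$; the relevant sign-flip is a transposition.) With this adjustment your algorithm always succeeds over $\mathbb R$ whenever $f\lin\vd_n$ over $\mathbb R$.
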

 \noindent The proof of Theorem~\ref{thm:vdequiv} is based on the correctness of Algorithm~\ref{algo-vdproj} described next. 
 \vspace*{4 mm}

{ 
    \begin{algorithm}[H]
    \label{algo-vdproj}
    \SetKwInOut{Input}{Input}
    \SetKwInOut{Output}{Output}
    \Input{$f=\ell_1\cdot\ell_2\cdots \ell_p$}
    \Output{\texttt{`$f$ is linearly equivalent to $\vd(x_1,\ldots,x_n)$'} if $f \equiv_{\sf lin} \vd$.  Else \texttt{`No'}}
\label{algo:check1}    \If{$p\neq\binom{n}{2}$ for any $n<p$ \textbf{or} $\dim(span\{\ell_1,\ell_2,\ldots,\ell_p\})\neq n-1$} 
      {
        return \texttt{`No such equivalence exists'} 
      }\label{algo:check2}
    \Else
    {
	$S\leftarrow \{\ell_1,\ell_2,\ldots,\ell_p\}$ \label{algo:sets}  \\
\label{algo:choice}
Let $T^{(0)}=\{ r_1,r_2,\ldots,r_{n-1}\}$ be  $n-1$ linearly independent linear forms in $S$.    \\
  
    $i \leftarrow 1$ \\
    $S' \leftarrow S\setminus T^{(0)} $\\
    \While{true}
    {
    $D_i\leftarrow \{ a+b , a-b, b-a \mid a,b \in T^{(i-1)}\}$ \Comment{$D_i$ is the set of differences} \label{algo:diff}\\
	$T^{(i)}\leftarrow \{T^{(i-1)} \cup D_i\} \cap S$   \\  
    $S'\leftarrow S\setminus T^{(i)}$ \\
    \If{$|S'|=0$} 
    { Output \texttt{`$f$ is linearly equivalent to $\vd(x_1,\ldots,x_n)$'} \label{algo:accept}\\
	\textbf{Exit.}
    }
    \If{$T^{(i-1)}=T^{(i)}$ \text{and} $|S'|\neq 0$}
    {    
	   Output \texttt{`No such equivalence exists'}\\
	    \textbf{Exit.}
    }    
    }
    }     \caption{$\vdequiv$}
\end{algorithm}
}
As a first step, we observe that lines (\ref{algo:check1})-(\ref{algo:check2}) of algorithm  are correct:
\begin{obs} If $f=\ell_1\cdot \dots \cdot \ell_p \lin \vd$ then $p=\binom{n}2$ and $\dim(span\{\ell_1,\ell_2,\ldots,\ell_p\})= n-1$.
\end{obs}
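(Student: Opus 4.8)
The plan is to unwind the definition of $\lin$ and then invoke unique factorization in $\mathbb{F}[x_1,\ldots,x_n]$. Suppose $f=\ell_1\cdots\ell_p \lin \vd$. Unpacking the definition, there are linearly independent homogeneous linear forms $m_1,\ldots,m_n$ with $\vd(x_1,\ldots,x_n)=f(m_1,\ldots,m_n)$; since the substitution $x_i\mapsto m_i$ is an invertible linear change of variables, this is the same as saying $f(\mathbf{x})=\vd(A\mathbf{x})$ for some $A\in{\sf GL}(n,\mathbb{F})$. Writing $L_i(\mathbf{x})$ for the $i$-th coordinate of $A\mathbf{x}$ — these are linearly independent homogeneous linear forms because $A$ is invertible — the product formula $\vd_n=\prod_{i<j}(x_i-x_j)$ gives
$$ f(\mathbf{x}) \;=\; \prod_{1\le i<j\le n}\bigl(L_i(\mathbf{x})-L_j(\mathbf{x})\bigr), $$
exhibiting $f$ as a product of exactly $\binom n2$ homogeneous linear forms.

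Next I would observe that these $\binom n2$ forms are pairwise non-proportional: a relation $L_i-L_j=c\,(L_k-L_\ell)$ with $\{i,j\}\ne\{k,\ell\}$ would yield a non-trivial linear dependence among at most four of the $L$'s, contradicting their linear independence. Since $\mathbb{F}[x_1,\ldots,x_n]$ is a UFD and every non-zero homogeneous linear form is irreducible, the two factorizations $f=\ell_1\cdots\ell_p=\prod_{i<j}(L_i-L_j)$ must agree up to reordering and non-zero scalars. In particular $p=\binom n2$, and the multisets $\{\ell_1,\ldots,\ell_p\}$ and $\{L_i-L_j : i<j\}$ span the same subspace of the space of linear forms.

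Finally I would compute that span. Because $L_1,\ldots,L_n$ are linearly independent, every $L_i-L_j$ lies in the hyperplane $H=\{\sum_k c_kL_k : \sum_k c_k=0\}$, which has dimension $n-1$; conversely the $n-1$ forms $L_1-L_2,L_2-L_3,\ldots,L_{n-1}-L_n$ are among the $L_i-L_j$ and already span $H$. Hence $\dim(span\{\ell_1,\ldots,\ell_p\})=n-1$. The argument is entirely routine; the only point needing a moment's care is the non-proportionality of the Vandermonde factors, which is what guarantees the span is genuinely $(n-1)$-dimensional rather than smaller, and this is immediate from the invertibility of $A$.
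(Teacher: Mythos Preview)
Your proof is correct and follows the same idea as the paper: write $f=\vd(L_1,\ldots,L_n)$ for linearly independent $L_i$, identify the $\ell_i$ with the differences $L_i-L_j$ via unique factorization, and read off the span. The paper's proof is a two-line sketch of exactly this; you have simply filled in the details (non-proportionality of the $L_i-L_j$, the explicit $(n-1)$-dimensional hyperplane) that the paper leaves implicit. One minor remark: $p=\binom{n}{2}$ already follows from degree comparison alone, so the UFD argument is really only needed for the span statement, and your closing comment slightly misattributes the role of non-proportionality---it is needed to match the two factorizations termwise, not to prevent the span of $\{L_i-L_j\}$ from collapsing (that span is $n-1$ regardless).
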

\begin{proof}
Clearly if $f\lin \vd$  we have $p=\binom{n}{2}$. For the second part suppose $f = \vd(L_1,\ldots, L_n)$, for some linearly independent homogeneous linear forms $L_1,\ldots, L_n$. Then $\{\ell_1,\ldots,\ell_p\} = \{L_i - L_j~|~ i<j\}$, and therefore $\dim(span\{\ell_1,\ell_2,\ldots,\ell_p\}) = n-1$. 
\end{proof}
The following theorem proves the correctness of the Algorithm \ref{algo-vdproj}.
\begin{theorem}
$f\lin \vd$  if and only if Algorithm $\ref{algo-vdproj}$ outputs \textnormal{\texttt{`$f$ is linearly equivalent to $\vd(x_1,\ldots,x_n)$'}}.
\end{theorem}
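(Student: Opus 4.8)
The plan is to argue both directions using the combinatorial structure of the difference set $\{L_i - L_j : i < j\}$. Write $e_{ij} = L_i - L_j$ for the "edges" of the complete graph $K_n$ on the linearly independent forms $L_1,\ldots,L_n$. The key identity driving the algorithm is the telescoping relation $e_{ij} + e_{jk} = e_{ik}$, together with $e_{ji} = -e_{ij}$; so from any two edges sharing a vertex we can recover a third edge, and this is exactly the operation performed in line (\ref{algo:diff}) when forming $D_i = \{a+b, a-b, b-a : a,b \in T^{(i-1)}\}$ and intersecting with $S$. For the forward direction, suppose $f \lin \vd$, so (after the checks in lines (\ref{algo:check1})–(\ref{algo:check2}) pass) $S = \{e_{ij} : i<j\}$ for some linearly independent $L_1,\ldots,L_n$. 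I would first observe that any $n-1$ linearly independent forms $T^{(0)} = \{r_1,\ldots,r_{n-1}\} \subseteq S$ correspond to a set of $n-1$ edges of $K_n$ whose vectors are linearly independent; I claim this forces those edges to form a spanning tree of $K_n$ (a dependent set of edges would contain a cycle, on which the $\pm e_{ij}$ sum to zero). Then I would show by induction that after $i$ rounds, $T^{(i)}$ contains all edges of $K_n$ that lie within distance $i$ (in the tree) — more precisely, $T^{(i)} \supseteq \{e_{uv} : \text{tree-distance}(u,v) \le 2^{i}\}$ or a similar bound — so that after at most $\lceil \log_2 n \rceil$ rounds $T^{(i)} = S$, whence $S' = \emptyset$ and the algorithm accepts.

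For the converse, suppose the algorithm accepts, i.e. it reaches line (\ref{algo:accept}) with $S' = \emptyset$, meaning the closure of $T^{(0)}$ under $\pm$-sums (restricted to $S$) equals all of $S$. Here I need to show this forces $f \lin \vd$. Starting from $T^{(0)}$, a set of $n-1$ linearly independent forms spanning an $(n-1)$-dimensional space (guaranteed by the line (\ref{algo:check1}) check), pick any basis and "coordinatize": set up formal unknowns $L_1,\ldots,L_n$ with $L_1 = 0$ (we may normalize, since $\vd$ only depends on the differences) and try to realize each element of $T^{(0)}$ as some $L_i - L_j$. The point is that closure under the difference operation, combined with $|S| = \binom{n}{2}$ and $\dim \operatorname{span}(S) = n-1$, should pin down that $S$ is isomorphic as a "difference structure" to the edge set of $K_n$; then defining $L_i$ appropriately gives $f = \prod_{i<j}(L_i - L_j) = \vd(L_1,\ldots,L_n)$ up to sign, and a sign fix (replacing some $L_i$ by $-L_i$ or permuting) completes it. The main obstacle is this direction: one must rule out that the $\pm$-closure of a "wrong" $T^{(0)}$ accidentally fills up $S$ while $S$ is not the $K_n$ difference set — this requires a careful rank/counting argument showing that any root-system-like subset of an $(n-1)$-dimensional space with exactly $\binom{n}{2}$ elements closed under the partial difference operation must be (a scaling of) the $A_{n-1}$ positive root system, equivalently the $K_n$ edge set.

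I would also handle the rejection branch: if $T^{(i-1)} = T^{(i)}$ but $S' \neq \emptyset$, the closure has stabilized as a proper subset of $S$, so by the contrapositive of the forward direction $f \not\lin \vd$ — here I should note that if $f \lin \vd$ were true, the forward argument shows the closure always reaches all of $S$ regardless of which $T^{(0)}$ was chosen, so a stalled proper closure certifies non-equivalence. Finally I would check termination and the polynomial time bound: each round of the while loop either strictly enlarges $T^{(i)} \subseteq S$ (so at most $|S| = \binom{n}{2}$ rounds) or triggers an exit, and each round does $O(|S|^2) = O(n^4)$ linear-algebra operations over $\mathbb{F}$ to compute $D_i$ and the intersection with $S$; combined with the $O(n^2)$-size input this is polynomial. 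The one subtlety worth flagging is that "$a \pm b \in S$" must be tested as equality of linear forms (vectors in $\mathbb{F}^n$), which is an exact comparison and hence fine over $\mathbb{F} \in \{\mathbb{R},\mathbb{C}\}$ in the algebraic model.
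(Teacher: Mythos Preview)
For the forward direction your approach differs from the paper's and is arguably cleaner. You model $S$ as the edge set of $K_n$ on vertices $L_1,\ldots,L_n$ and use the (correct) matroid fact that $n-1$ linearly independent elements of $\{L_i-L_j\}$ must form a spanning tree, from which the $\pm$-closure reaches every edge. The paper instead normalises $L_i:=L_i'-L_n'$ so that $S$ becomes the union of \emph{vertex} labels $L_1,\ldots,L_{n-1}$ and \emph{edge} labels $L_i-L_j$ of $K_{n-1}$, and proves (Claim~\ref{claim:one vertex}) that each connected component of the chosen subgraph contains exactly one vertex-type label; from there it argues all $L_i$ eventually appear, then all of $S$. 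Both routes share the same hidden assumption, namely that the input factors $\ell_k$ equal the $L_i'-L_j'$ \emph{on the nose} rather than merely up to scalar; neither you nor the paper addresses what happens when the given factorisation carries different scalings.

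Your converse direction has a genuine gap that cannot be repaired along the lines you sketch. You propose to show that any $\binom{n}{2}$-element set $S$ in an $(n-1)$-dimensional space whose iterated $\pm$-closure (intersected with $S$) from some independent $(n{-}1)$-subset exhausts $S$ must be a rescaling of the $A_{n-1}$ positive roots. This is false. Take $n=4$ and $S=\{r_1,r_2,r_3,\,r_1{+}r_2,\,r_1{+}r_3,\,r_2{+}r_3\}$ with $r_1,r_2,r_3$ independent: then $|S|=\binom{4}{2}$, $\dim\operatorname{span}(S)=3$, and with $T^{(0)}=\{r_1,r_2,r_3\}$ one round already gives $T^{(1)}=S$, so the algorithm accepts. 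But the six differences $L_i-L_j$ of any four independent points contain exactly four linearly dependent triples (one per omitted vertex), whereas this $S$ has only three (the triples $\{r_i,r_j,r_i{+}r_j\}$). Linear dependence of a triple is invariant under rescaling the members, so no rescaling of $S$ can realise it as a $\vd_4$ difference set, and $\prod_{s\in S}s\not\lin\vd_4$. Hence acceptance does not force the $A_{n-1}$ structure, and the root-system rigidity you are reaching for simply does not hold. (For comparison, the paper's converse is the single line $\ell_1\cdots\ell_p=\vd(\ell,\ell-r_1,\ldots,\ell-r_{n-1})$ asserted without justification; it too fails on this $S$, since that product equals $r_1r_2r_3(r_2{-}r_1)(r_3{-}r_1)(r_3{-}r_2)$, not $r_1r_2r_3(r_1{+}r_2)(r_1{+}r_3)(r_2{+}r_3)$.)
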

\begin{proof}
We first argue the forward direction. Suppose  there are  $n$ homogeneous linearly independent linear forms $L_1',L_2',\ldots,L_{n}'$ such that $f=\ell_1\cdot\ell_2\cdots \ell_p=\prod\limits_{i<j\\i,j\in[n]}(L_i'-L_j')$. Consider the linear forms $L_1 = L_1'-L_n', ~~L_2 = L_2'-L_n', \cdots , L_{n-1} = L_{n-1}'-L_n'.$
Then \begin{equation}
\label{eqn:vd-homo}
 f=\ell_1\cdot\ell_2\cdots \ell_p=\prod_{i=1}^{n-1}L_i\cdot \prod_{i<j}(L_i-L_j).
\end{equation}
Let $S\triangleq\{\ell_1,\ell_2,\ldots,\ell_p\}$ as in line (\ref{algo:sets}) of the algorithm. By	equation (\ref{eqn:vd-homo}), we have 
$$S=\{L_1,L_2,\ldots,L_{n-1}\} \cup \{ L_i-L_j \mid  i<j,~~ i,j\in[n-1] \}.$$
Let $S_1 \triangleq\{L_1,L_2,\ldots,L_{n-1}\}$ and  $S_2 \triangleq\{ L_i-L_j \mid  i<j,~~ i,j\in[n-1]\}$. Consider the undirected complete graph $G$ with vertices $\{v_1,v_2,\ldots,v_{n-1}\}$. 
For every vertex $v_i\in V(G)$, let $\lab(v_i)$ denote the linear form $L_i$ associated with the vertex $v_i$. Similarly for every edge $e=(v_i,v_j)\in E(G)$ let $\lab(e)$ be defined as follows :

\begin{equation}
\label{eq:label}
\lab(e)  =  \begin{cases} L_i-L_j &\mbox{if $i<j$ } \\
L_j-L_i & \mbox{if $j<i$ }  \end{cases}
\end{equation}

Using notations used in  line (\ref{algo:choice}) of Algorithm~\ref{algo-vdproj}, we have $\{r_1,r_2,\ldots,r_{n-1}\}\subseteq S$. Observe that for every $i\in[n-1]$ the linear form $r_i$ corresponds to either a vertex or an edge label in $G$. Let $Q_1\triangleq \{r_1,r_2,\ldots,r_{n-1}\}\cap S_1$ and  $Q_2\triangleq \{r_1,r_2,\ldots,r_{n-1}\}\cap S_2$. Suppose $|Q_2|=k$ and $|Q_1|=n-k-1$, linear forms in $Q_1$ correspond to labels of vertices in $V(G)$ and linear forms in $Q_2$ correspond to labels of edges in $E(G)$. For some $k\geq 0$, let
\begin{align*}
Q_1 &= \{\lab({u_1}),\lab({u_2}),\ldots,\lab({u_{n-k-1}})\} \hspace*{10 mm} \text{for $u_1,u_2,\ldots,u_{n-k-1}\in V(G)$}\\
Q_2 &= \{\lab({e_1}),\lab({e_2}),\ldots,\lab(e_k)\} \hspace*{19 mm} \text{for $e_1,e_2,\ldots,e_{k}\in E(G)$}
\end{align*}


\noindent Let $G[r_1,\ldots, r_{n-1}]$ denote the sub-graph $\{u_1,\dots, u_{n-k-1}\} \cup \{e_1,\dots, e_k\}$, i.e., consisting of edges with labels in $Q_2$ and vertices incident on them and vertices with labels in $Q_1$. 


We need the following claim:
\begin{claim}
\label{claim:one vertex}
For any choice of linearly independent linear forms $\{r_1,r_2,\ldots,r_{n-1}\}$ by the algorithm in line $(\ref{algo:choice})$, any connected component $C$ in $G[r_1,r_2,\ldots,r_{n-1}]$ has   exactly one vertex  with label in $Q_1$. More formally, if $Q_C\triangleq\left(\bigcup_{v\in Q_1}\lab({v})\right)\bigcap\left( \bigcup_{w\in V(C)}\lab{(w)}\right)$
then $|Q_C| =1$.

\end{claim}
\begin{proof}[Proof of Claim~\ref{claim:one vertex}]
Proof is by contradiction. Suppose there is a connected component $C$ in $G[r_1,r_2,\ldots,r_{n-1}]$ with $|Q_C| \geq 2$. Let $v_i,v_j \in Q_C$. Assume without loss of generality that $i<j$. Consider the path $\bar{P} = (v_i,e_{c_1},e_{c_2},\ldots,e_{c_{|\bar{P}|-1}},v_j)$ between $v_i$ and $v_j$ in the connected component $C$, where $e_{c_1}, \ldots , e_{c_{|\bar{P}|}-1}$ are edges. From the definition of $G$, we know that there are constants $\alpha_1,\ldots \alpha_{|\bar{P}| -1} \in \{-1,1\}$ such that $$\alpha_1\lab{(e_{c_1})}+\alpha_2\lab{(e_{c_2})}+\cdots+\alpha_{|\bar{P}|-1}\lab{(e_{c_{|\bar{P}|-1}})} = \lab{(v_i)}-\lab{(v_j)}.$$  Therefore, $\{\lab{(v_i)},\lab{(e_{c_1})},\lab{(e_{c_2})},\ldots,\lab{(e_{c_{|\bar{P}|-1}})},\lab{(v_j)}\}$ is a linearly dependent set. Since $C$ is a connected component in $G[r_1,r_2,\ldots,r_{n-1}]$ we have that the set of linear forms $\{\lab{(v_i)},\lab{(e_{c_1})},\lab{(e_{c_2})},\ldots,\lab{(e_{c_{|\bar{P}|-1}})},\lab{(v_j)}\} \subseteq \{r_1,r_2,\ldots,r_{n-1}\} $, hence a contradiction.  Now, suppose there exists a connected component $C$ with $Q_C=\emptyset$. Let $v$ be any vertex in $C$. Clearly, $\{\lab{(v)}\cup \{r_1,\ldots, r_{n-1}\}\}$ is a linearly independent set, a contradiction since $\dim (\mathbb{F}\mbox{-}{\sf span}( S)) = n-1$.
\end{proof}
 
Now, the following claim completes the proof of the forward direction:
\begin{claim}
\label{cl:reach}
\begin{itemize}
\item[(i)] If $f\lin \vd $ then there exists an $m$ such that $\{L_1,L_2,\ldots,L_{n-1}\}\subseteq T^{(m)}$. 
\item[(ii)]For any $m$, if $\{L_1,L_2,\ldots,L_{n-1}\}\subseteq T^{(m)}$ then the set $T^{(m+1)} = S$ and the algorithm outputs \textnormal{ \texttt{`$f$ is linearly equivalent to $\vd (x_1,\ldots,x_n)$'}} in line \ref{algo:accept}. 
\end{itemize}
\end{claim}
\begin{proof}[Proof of Claim~\ref{cl:reach}]
(i)    Let $C$ be a connected component in $G[r_1,\ldots,r_{n-1}]$. By Claim \ref{claim:one vertex} we have $|Q_C|=1$. Let $Q_C=\{b\}$ and $L=\lab{(b)}$. We argue by induction on $i$ that for every vertex $v\in V(C)$ with  $dist(L,v)\leq i$, $\lab{(v)}\in T^{(i)}$. Base case is when $i=0$ and follows from the definition of $T^{(0)}$. For the induction step, let $u\in V(C)$ be such that $(u,v)\in E(G[r_1,\ldots , r_{n-1}])$ and $dist(L,u)\leq i-1$. By the induction hypothesis, we have $\lab{(u)}\in T^{(i-1)}$.
  Also, since $\lab(u,v)\in \{r_1,\ldots, r_{n-1}\}=T^{(0)}$, we have   $\lab{(u,v)}\in T^{(i-1)}$.   By line \ref{algo:diff} of the algorithm, the linear form $\lab{(u,v)}+\lab{(u)}\in D_i$ where $D_i$ is the set of differences in the $i^{th}$ iteration of the while loop. Now, by the definition of labels in \ref{eq:label}, $(L_v-L_u)+L_u\in D_i$ if $v<u$ or  $L_u-(L_u-L_v)\in D_i$ if $u<v$. In any case, $L_v=\lab{(v)}\in T^{(i)}$ as required. Now, if $m\ge n-1$ then we have $\{L_1,\ldots, L_{n-1}\} \subseteq T^{(m)}$.

\noindent (ii) If $\{L_1,L_2,\ldots,L_{n-1}\}\subseteq T^{(m)}$ then clearly $T^{(m)} \cup D_m  = S$. Hence $T^{(m+1)}=S$ and algorithm outputs \textnormal{ \texttt{`$f$ is linearly equivalent to $\vd (x_1,\ldots,x_n)$'}} in line \ref{algo:accept}.
\end{proof} 

   Suppose Algorithm~\ref{algo-vdproj} outputs \textnormal{ \texttt{`$f$ is linearly equivalent to $\vd (x_1,\ldots,x_n)$'}} in $k$ steps. Consider the polynomial $\vd(\ell,r_1,r_2,\ldots,r_{n-1})$ where $\{r_1,r_2,\ldots,r_{n-1}\}$ is the linearly independent set chosen in line \ref{algo:choice} of  Algorithm~\ref{algo-vdproj}  and $\ell$ is an arbitrary linear form such that the set  $\{\ell, r_1,r_2,\ldots,r_{n-1}\}$ is linearly independent. Then, we have   $\ell_1\ell_2\cdots\ell_p =  \vd(\ell,\ell -r_1,\ell-r_2,\ldots,\ell - r_{n-1})$. 
\end{proof}

\begin{corollary}
$\vdequiv$ is in $\RP$ when $f$ is given  as a black-box.
\end{corollary}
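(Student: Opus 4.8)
The plan is to reduce the black-box case to the product-of-linear-forms case already handled by Theorem~\ref{thm:vdequiv}. The key observation is that if $f \lin \vd_n$, then $f = \prod_{i<j}(L_i - L_j)$ factors completely into $\binom{n}{2}$ homogeneous linear forms, so $f$ splits into linear factors over $\mathbb{F}$. Hence the first step is: run Kaltofen's randomized polynomial-time black-box factorization algorithm~\cite{Kal89} on $f$. With high probability this returns the irreducible factors of $f$ together with their multiplicities, each as a black-box (or explicitly, since linear forms have polynomial-size representations).

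Next I would inspect the output of the factorization. If any irreducible factor has degree greater than $1$, or if any factor appears with multiplicity greater than $1$ (the Vandermonde polynomial is squarefree), or if the number of linear factors counted with multiplicity is not of the form $\binom{n}{2}$, then we can immediately reject: $f \not\lin \vd$. Otherwise we have written $f = \ell_1 \cdots \ell_p$ with $p = \binom{n}{2}$ and each $\ell_i$ a homogeneous linear form, which is exactly the input format required by Algorithm~\ref{algo-vdproj}. We then feed this factorization into the deterministic algorithm of Theorem~\ref{thm:vdequiv}, which either produces the desired linearly independent forms $L_1, \ldots, L_n$ with $f = \vd(L_1, \ldots, L_n)$ or correctly reports that no such equivalence exists.

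For correctness: if $f \lin \vd$ then $f$ genuinely is a product of $\binom{n}{2}$ homogeneous linear forms, so factorization succeeds (with high probability) and the subsequent deterministic test accepts by Theorem~\ref{thm:vdequiv}. Conversely, if the factorization step reports linear factors that pass all the preliminary checks and Algorithm~\ref{algo-vdproj} accepts, then by correctness of Theorem~\ref{thm:vdequiv} we indeed have $f \lin \vd$. The only randomness is in Kaltofen's algorithm, which has one-sided error, so the overall procedure is in $\RP$; the running time is polynomial since factorization, the arithmetic checks, and Algorithm~\ref{algo-vdproj} are each polynomial-time.

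The main subtlety — rather than a genuine obstacle — is bookkeeping the interface between the black-box model and Algorithm~\ref{algo-vdproj}, which expects its input as an explicit list of linear forms: one must observe that Kaltofen's algorithm, when it outputs a degree-one factor, can be used to recover the coefficient vector of that linear form via interpolation on $n+1$ points, so the explicit representation is available in polynomial time. One should also note that a homogeneous polynomial's irreducible factors may be taken homogeneous, so linearity of a factor automatically means homogeneous linearity, matching the hypothesis of Theorem~\ref{thm:vdequiv}.
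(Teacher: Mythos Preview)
Your proposal is correct and follows essentially the same approach as the paper: run Kaltofen's black-box factorization, interpolate each degree-one factor to obtain it explicitly, and then invoke Algorithm~\ref{algo-vdproj} on the resulting product of linear forms. The extra sanity checks you add (rejecting on a nonlinear factor, repeated factor, or wrong number of factors) and your remark on recovering homogeneous linear forms by interpolation are sound refinements, but the core reduction is identical to the paper's Algorithm~\ref{algo-vdproj2}.
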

\begin{proof}
The result immediately follows from Algorithm~\ref{algo-vdproj2} and Theorem~\ref{thm:vdequiv}.\\
\vspace*{5 mm}
  \begin{algorithm}[H]
    \label{algo-vdproj2}
    \SetKwInOut{Input}{Input}
    \SetKwInOut{Output}{Output}
    \Input{$f\in\mathbb{F}[x_1,x_2,\ldots,x_n]$ as a black-box}
    \Output{\texttt{`$f$ is linearly equivalent to $\vd(x_1,\ldots,x_n)$'} if $f\lin {\sf VD}$. Else \texttt{`No such equivalence exists'}}  
     
        Run Kaltofen's factorization Algorithm~\cite{Kal89}\\
        \If{$f$ is irreducible}
        { Output \texttt{`No such equivalence exists'}
      }
    \Else
    { 
    Let $B_1,B_2,\ldots,B_p$ be black-boxes to the irreducible factors of $f$ obtained from Kaltofen's Algorithm.\\
    Interpolate the black-boxes  $B_1,\dots, B_p$ to get the explicit linear forms  $\ell_1,\ell_2,\ldots,\ell_p$ respectively.\\
	Run Algorithm \ref{algo-vdproj}  with $\ell_1\dots \ell_p$ as input.
    }     \caption{$\vdequiv-2$}
\end{algorithm}
\end{proof}
\noindent Finally,   in the black-box setting we show:
\begin{corollary}
$\pit$ is polynomial time equivalent to $ \vdequiv$ in the black-box setting.
\end{corollary}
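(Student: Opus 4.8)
We must give two deterministic polynomial-time reductions, $\vdequiv\le_p\pit$ and $\pit\le_p\vdequiv$, both in the black-box setting.

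First I would establish $\vdequiv\le_p\pit$ by derandomising Algorithm~\ref{algo-vdproj2}. Its only randomised ingredient is Kaltofen's black-box factorisation~\cite{Kal89}, and black-box polynomial factorisation is known to reduce in deterministic polynomial time to $\pit$; hence, with a $\pit$ oracle, we may deterministically obtain black-boxes $B_1,\dots,B_p$ for the irreducible factors of $f$ (after a $\pit$ query confirming $f\not\equiv 0$). When $f\lin\vd$ each $B_i$ is a homogeneous linear form, so I would interpolate each $B_i$ at the $n+1$ points $0,e_1,\dots,e_n$ to obtain a linear form $\ell_i$, reject unless every $\ell_i$ is homogeneous and (by a single $\pit$ query) $\ell_1\cdots\ell_p=f$, and otherwise run the deterministic Algorithm~\ref{algo-vdproj} on $\ell_1,\dots,\ell_p$ and return its verdict. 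Rejection is sound because the irreducible factors of any $f$ with $f\lin\vd$ are homogeneous linear forms; and when the checks pass, $f=\ell_1\cdots\ell_p$ really is a product of homogeneous linear forms, so Algorithm~\ref{algo-vdproj} is correct by the theorem already proved. Everything is then deterministic polynomial time relative to $\pit$.

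For the other direction, $\pit\le_p\vdequiv$, the plan rests on the observation that every polynomial in the ${\sf GL}(m,\mathbb{F})$-orbit of $\vd_m$ is homogeneous of degree $\binom{m}{2}$, being obtained from the homogeneous polynomial $\vd_m$ by a linear change of variables. Given a black-box for $g\in\mathbb{F}[x_1,\dots,x_n]$ with degree bound $d$, put $N:=\max(n,d+2)$, so that $\binom{N}{2}>d\ge\deg g$, and form the (trivially evaluable) black-box for $h:=\vd_N(x_1,\dots,x_N)+g(x_1,\dots,x_n)$; then query the $\vdequiv$ oracle on $h$. If $g\equiv 0$ then $h=\vd_N\lin\vd$, so the oracle accepts. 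If $g\not\equiv 0$ then the degree-$\binom{N}{2}$ homogeneous part of $h$ equals $\vd_N\neq 0$, while a nonzero homogeneous part of $g$ lies in some degree $\le d<\binom{N}{2}$; thus $h$ has nonzero homogeneous components in two distinct degrees, is not homogeneous, and so $h\not\lin\vd$ and the oracle rejects. Hence $g\equiv 0$ if and only if the oracle accepts, and the reduction runs in polynomial time.

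The main obstacle is purely the derandomisation of the factorisation step in the first reduction; once that external result is granted both reductions are routine, the homogeneity trick being where the Vandermonde structure (mildly) enters in the second. If one wishes to avoid the general factorisation-to-$\pit$ reduction, one can instead peel linear factors off $f$ one at a time, testing ``$\ell$ divides $f$'' by checking with the $\pit$ oracle that $f$ vanishes on the hyperplane $\{\ell=0\}$; turning this into an efficient deterministic search is somewhat delicate, so invoking the known reduction is the cleaner route.
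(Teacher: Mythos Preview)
Your proposal is correct. For $\vdequiv\le_p\pit$ you follow the paper's route exactly: invoke the known equivalence of black-box factorisation and $\pit$ to derandomise Algorithm~\ref{algo-vdproj2}, then feed the linear factors into Algorithm~\ref{algo-vdproj}. (Your extra sanity checks are harmless.)

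For $\pit\le_p\vdequiv$ your reduction is a genuine variant. The paper instead sets $g:=x_1^{\binom{n}{2}+1}f+\vd(x_1,\ldots,x_n)$: if $f\not\equiv 0$ then $\deg g>\binom{n}{2}$, so $g$ cannot lie in the orbit of $\vd_n$. You achieve the same effect by padding the variable set to $N$ so that $\binom{N}{2}>\deg g$ and taking $h:=\vd_N+g$, then ruling out equivalence via non-homogeneity rather than via degree. Both arguments exploit that every polynomial in the $\vd$ orbit is homogeneous of degree $\binom{n}{2}$; the paper perturbs the degree while you perturb homogeneity. The paper's construction stays in the original $n$ variables (so the reduction size depends only on $n$, not on $d$), whereas yours is perhaps conceptually cleaner since no multiplication by a large power is needed. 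Either way the reduction is immediate once the invariant is identified.
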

\begin{proof}
Since polynomial factorization is polynomial time equivalent to $\pit$ in the black-box setting, by Theorem~\ref{thm:vdequiv} we have, $\vdequiv \le_{P} \pit$. For the converse direction, let $f\in\mathbb{F}[x_1,\ldots,x_n]$ be a polynomial of degree $d$. Given black-box access to $f$ we construct black-box to a polynomial $g$ such that $f\equiv 0$ if and only if $g \lin \vd$. Consider the polynomial $g=x_1^{\binom{n}{2}+1}f+ \vd(x_1,x_2,\ldots,x_n)$. 
If $f\equiv 0$ then clearly $g=\vd(x_1,x_2,\ldots,x_n)$. If $f\not\equiv 0$ then $\deg(g) > \binom{n}{2}$ and hence $g$ is not linearly equivalent to $\vd$. 
Observe that given  black-box access to $f$ we can construct in polynomial time a black-box for $g$.
\end{proof}

 \section{Group of symmetries and Lie algebra of Vandermonde determinant}
 In this section we characterize the group of symmetries and Lie algebra of the Vandermonde polynomial. 
 
 \begin{theorem}
 	\label{thm:group-of-sym}
 	Let $\vd$ denote the determinant of the symbolic $n\times n$ Vandermonde matrix.  Then, 
 	\begin{center}
 		$\mathscr{G}_{\vd}= \{ (I + (v\otimes 1) )\cdot P\mid P \in A_n,  v\in\mathbb{F}^n\}$
 	\end{center}
 	where $A_n$ is the alternating group on $n$ elements.
 \end{theorem}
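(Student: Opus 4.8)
The plan is to establish the two inclusions separately; ``$\supseteq$'' is immediate and all the work sits in ``$\subseteq$''. For ``$\supseteq$'': a matrix $I+(v\otimes 1)$ realises the substitution $x_i\mapsto x_i+\langle v,\mathbf{x}\rangle$, which fixes every difference $x_i-x_j$ and hence fixes $\vd=\prod_{i<j}(x_i-x_j)$; a permutation matrix $P$ with $P\in A_n$ fixes $\vd$ because $\vd$ is alternating and even permutations have sign $+1$; and $\mathscr{G}_{\vd}$ is closed under products, so $(I+(v\otimes 1))P\in\mathscr{G}_{\vd}$ whenever $P\in A_n$.

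For ``$\subseteq$'', let $A\in\mathscr{G}_{\vd}$ and let $L_1,\dots,L_n$ be the homogeneous linear forms given by the rows of $A$, so that $\vd(A\mathbf{x})=\prod_{i<j}(L_i-L_j)$. Since $A$ is invertible no two $L_i$ coincide, so each $L_i-L_j$ is a nonzero, hence irreducible, linear form; comparing with $\vd(\mathbf{x})=\prod_{i<j}(x_i-x_j)$ in the UFD $\mathbb{F}[x_1,\dots,x_n]$ produces a bijection matching the $\binom{n}{2}$ forms $L_i-L_j$ with the $\binom{n}{2}$ elementary differences $x_a-x_b$ up to nonzero scalars whose product is $1$. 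In particular every $L_i-L_j$ is a nonzero scalar multiple of some $x_a-x_b$.

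The crux is then a rigidity statement: this property alone forces $L_i=\mu\,x_{\pi(i)}+\ell$ for a single scalar $\mu\in\mathbb{F}^{\times}$, a single permutation $\pi\in S_n$, and a single linear form $\ell$; equivalently $A=\mu\,(I+(v\otimes 1))\,P$ for some $v\in\mathbb{F}^n$ and some permutation matrix $P$. I would prove this by re-running the labelled-graph argument behind Claim~\ref{claim:one vertex} and Claim~\ref{cl:reach}: on the complete graph with vertices $v_1,\dots,v_n$ set $\lab(v_i)=L_i$ and $\lab(v_i,v_j)=L_i-L_j$, and observe that the cocycle relations $(L_i-L_j)+(L_j-L_k)=L_i-L_k$, together with ``every label is a scalar multiple of an elementary difference'', propagate one common scalar and one coordinate-assignment along edges; the propagation is global because the $n-1$ forms $L_i-L_n$ are linearly independent and all $L_i$ share a coordinate-sum (as $(L_i-L_j)(\mathbf{1})=0$, i.e.\ $A\mathbf{1}\in\mathbb{F}\mathbf{1}$). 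An alternative I would keep in reserve is to pass to $\mathbb{F}^n/\mathbb{F}\mathbf{1}$, where the $L_i-L_j$ become the roots of type $A_{n-1}$, and invoke the fact that the only linear self-maps of that arrangement are permutations of the coordinates composed with scalar multiples of the identity.

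Finally, substituting $A=\mu\,(I+(v\otimes 1))\,P$ back into the defining equation gives $\vd(A\mathbf{x})=\mu^{\binom{n}{2}}\,\mathrm{sgn}(P)\,\vd(\mathbf{x})$, so membership in $\mathscr{G}_{\vd}$ amounts to the single relation $\mu^{\binom{n}{2}}\mathrm{sgn}(P)=1$; resolving this over $\mathbb{F}\in\{\mathbb{R},\mathbb{C}\}$ and normalising the scalar into the translation factor $I+(v\otimes 1)$ leaves $\mathrm{sgn}(P)=1$, i.e.\ $P\in A_n$, which is the asserted description. I expect the rigidity statement to be the main obstacle — ruling out ``twisted'' placements of $L_1,\dots,L_n$ in which the elementary differences attached to the edges fail to come from one global permutation is where the real work lies — while the factorisation step and the concluding scalar bookkeeping are routine, the one delicate point in the latter being to check that the scalar $\mu$ contributes nothing beyond what $I+(v\otimes 1)$ and the parity of $P$ already supply.
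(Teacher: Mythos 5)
Your route is essentially the paper's: the containment $\supseteq$ by direct computation, and for $\subseteq$ a unique-factorization step matching the forms $L_i-L_j$ with elementary differences, followed by a rigidity argument that the induced matching of pairs comes from a single permutation (this is the paper's Claim~\ref{claim:claim1}, proved there by a star-to-star case analysis on the labelled complete graph; your cocycle/root-system propagation is a workable substitute, and once the permutation $\pi$ is found the relations $(L_i-L_j)+(L_j-L_k)=L_i-L_k$ do force a single scalar $\mu$ for $n\ge 3$). In fact, by carrying the scalars $c_{ij}$ coming from unique factorization you are being more careful than the paper, which asserts $\ell_i-\ell_j=x_{i'}-x_{j'}$ on the nose.

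The genuine gap is your last step. You cannot ``normalise the scalar into the translation factor'': if $\mu(I+(v\otimes 1))=I+(w\otimes 1)$ then $(\mu-1)I$ equals a matrix of rank at most one, so $\mu=1$ whenever $n\ge 2$. And the constraint $\mu^{\binom{n}{2}}\mathrm{sgn}(P)=1$ does not force $\mu=1$: over $\mathbb{C}$ every primitive $\binom{n}{2}$-th root of unity $\omega$ gives $\omega I\in\mathscr{G}_{\vd}$; over $\mathbb{R}$, $-I\in\mathscr{G}_{\vd}$ when $\binom{n}{2}$ is even (e.g.\ $n=4$), and for $n=3$ the map $x_1\mapsto -x_2$, $x_2\mapsto -x_1$, $x_3\mapsto -x_3$ (i.e.\ $-P$ for a transposition $P$) fixes $\vd$. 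None of these matrices can be written as $(I+(v\otimes 1))P'$ with $P'$ a permutation, by the same rank-one argument applied row by row (for $n\ge 3$). So the step ``resolving this \ldots leaves $\mathrm{sgn}(P)=1$'' fails, and it cannot be repaired: carried out honestly, your argument proves $\mathscr{G}_{\vd}=\{\mu(I+(v\otimes 1))P \mid \mu^{\binom{n}{2}}\mathrm{sgn}(P)=1,\ v\in\mathbb{F}^n\}$, which strictly contains the boxed set once $n\ge 3$. The paper's proof differs precisely here: by writing $\ell_i-\ell_j=x_{i'}-x_{j'}$ exactly at the unique-factorization step it silently sets every $c_{ij}=1$ and never sees the scalar $\mu$. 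To finish along your lines you must either prove $\mu=1$ (which the examples above rule out) or adjust the target description to include the admissible scalars; simply folding $\mu$ into $I+(v\otimes 1)$ is not available.
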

 \begin{proof}
 		We first argue the forward direction. Let $A=B + (v\otimes 1)$ where $B\in A_n$ and $v=(v_1,v_2,\ldots,v_n)\in\mathbb{F}^n$. We show that $A\in\mathscr{G}_{\vd}$ :  Let $\sigma$ be the permutation defined by the permutation matrix $B$. Then the transformation defined by $A$ is $A\cdot x_i =  x_{\sigma(i)}+ \sum_{i=1}^{n}v_ix_i$. Now it is easy to observe that $\prod_{i<j}(x_i-x_j) = \prod_{i<j}((A\cdot x_i)-(A\cdot x_j))$. Therefore $A\in\mathscr{G}_{\vd}$.
 	
 	 For the converse direction, consider $A\in\mathscr{G}_{\vd}$. To show that  $A = B + (v\otimes 1)$ where $B\in A_n$ and $v=(v_1,v_2,\ldots,v_n)\in\mathbb{F}^n$.  $A$ defines a linear transformation on the set of variables $\{x_1,x_2,\ldots,x_n\}$ and let  $\ell_i = A\cdot x_i$. We have $\prod_{i<j}(x_i-x_j) = \prod_{i<j}(\ell_i-\ell_j)$. By unique factorization of polynomials, we have that there exists a bijection $\sigma:\{(i,j) \mid i<j\} \rightarrow \{(i,j) \mid i<j\}$ such that  $\sigma(i,j)=(i',j')$ iff $\ell_i-\ell_j=x_{i'}-x_{j'}$.

 		We now show that the $\sigma$ is induced by a permutation $\pi\in S_n$: 
 		\begin{claim}
 		\label{claim:claim1}
 			Let $\sigma$ be as defined above. Then there exists a permutation $\pi$ of $\{1,\ldots,n\}$ such that $\sigma (i,j) = (\pi(i), \pi(j))$.
 					\end{claim}
 	\noindent \textbf{Proof of Claim \ref{claim:claim1}} : Let $G$ be a complete graph on $n$ vertices  such that  edge $(i,j)$ is labelled by $(\ell_i -\ell_j)$ for  $i<j$.  Let $H$ be the complete graph on   $n$ vertices with the  edge $(i,j)$  labelled by $(x_i-x_j)$ for  $i<j$.   Now $\sigma$ can be viewed as a bijection from $E(G)$ to $E(H)$.   It is enough to argue that    for any $1\le i\le n$,
  		 		\begin{align}
 		\nonumber \sigma (\{(1 ,i), (2,i), \ldots, ({i-1}, {i}), ({i},{i+1}), \ldots, (i,n)\}) &= \\
 		  \{(1, {k_i}), (2,k_i), \ldots, (k_i-1, k_i), (k_i, k_i+1), \ldots, (k_i, n)\}
 		 \label{eq:perm}
 		  		\end{align}
 		  		 for some unique $k_i\in[n]$. Then $\pi: i \mapsto k_i$ is the required permutation. 

\noindent  For the sake of contradiction, suppose that (\ref{eq:perm})  is not satisfied for some $i\in[n]$.   Then there are distinct $j,k, m \in [n]$ such that the edges 
$\{ (i,j), (i,k), (i,m)\}$ in $G$ under $\sigma$ map to edges in $\{(\alpha,\beta),(\gamma,\delta),(\eta,\kappa)\}$ in $H$  where the edges $(\alpha,\beta),(\gamma,\delta)$ and $(\eta,\kappa)$ do not form a star in $H$. Note that $\alpha,\beta,\gamma,\delta,\eta,\kappa$ need not be distinct. Various possibilities for the vertices $\alpha,\beta,\gamma,\delta,\eta,\kappa$ and the corresponding vertex-edge incidences in $H$ are depicted in the Figure~\ref{fig:graph1}. Observe that in the figure  the edges are labelled with a $\pm$ sign to denote that based on whether $i<j$ or $j<i$ one of $+$ or $-$ is chosen.

  
\begin{figure}[H]
\includegraphics[scale=0.6]{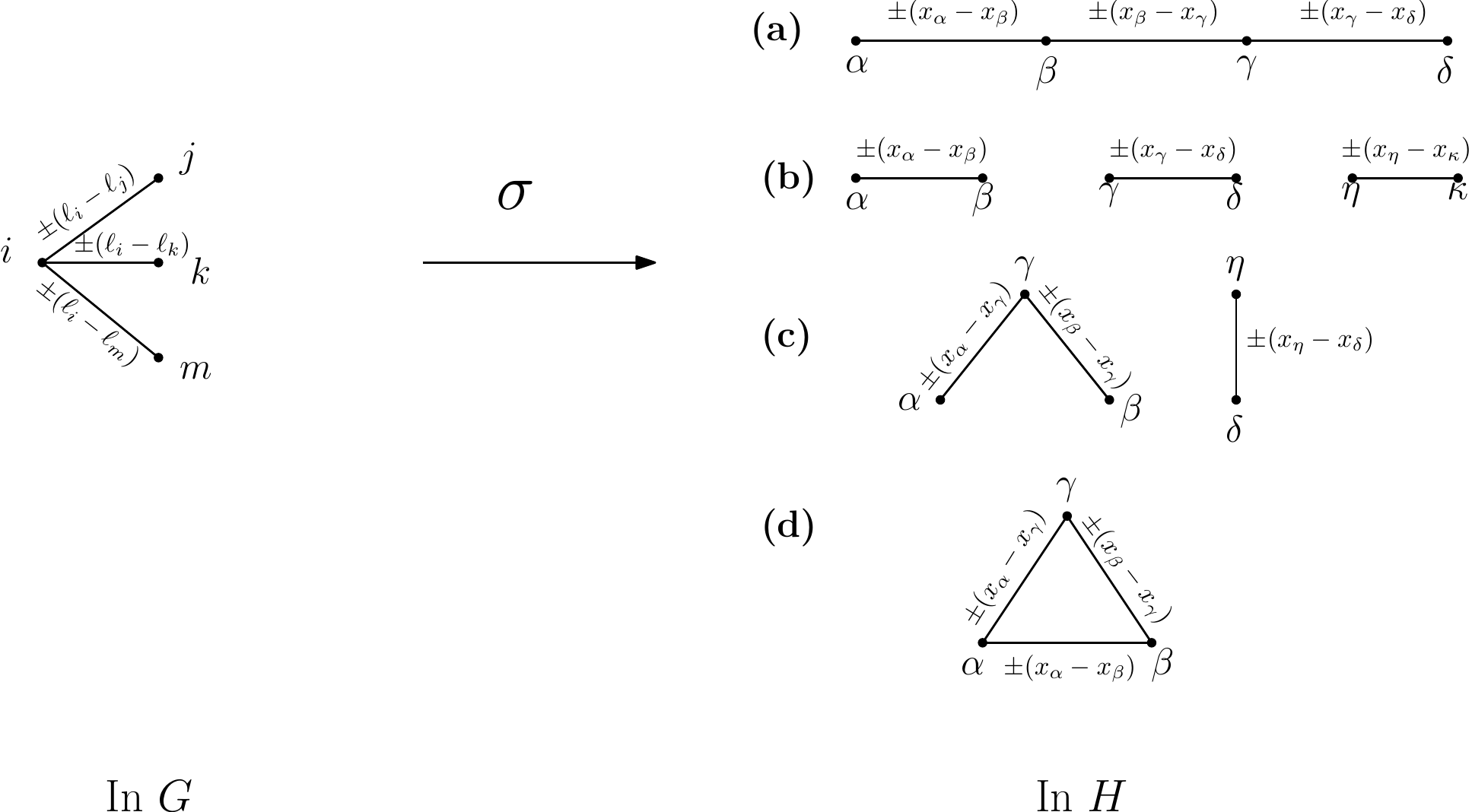}
\caption{The map $\sigma$ on vertex $i$ in $G$}
\label{fig:graph1}
\end{figure}
%
\noindent Recall that we have, 
\begin{equation}
 		\label{eq:var}		\forall i<j~~ |\var(\ell_i-\ell_j)|=2.
 		\end{equation}
We denote by $P$ the edges $\{ (i,j), (i,k), (i,m)\}$ in $G$. Consider the following two cases :
\begin{itemize}
\item[]\textbf{Case 1 :} $P$ in $G$ maps to one of $(a),(b)$ or $(c)$ in $H$ under $\sigma$ (see Figure~\ref{fig:graph1}). In  each of the possibilities,  it can be seen that  there exist linear forms $\ell'$ and $\ell''$ in $\{\ell_i,\ell_j,\ell_k,\ell_m\}$ such that $|\var( \pm(\ell'-\ell''))|=4$ which is a contradiction to Equation \ref{eq:var}.
\item[]\textbf{Case 2 :} $P$ in $G$ maps to  $(d)$ in $H$ under $\sigma$ (see Figure~\ref{fig:graph1}). Without loss of generality  suppose $\sigma(i,j) = (\alpha,\beta)$, $\sigma(i,k) = (\alpha,{\gamma})$ and $\sigma(i,m) = ({\beta},{\gamma})$.  Recall that $\sigma(i,j) = (i',j')$ if and only if $\ell_i -\ell_j = x_{i'}- x_{j'}$. Then we get $\ell_j - \ell_k = x_{\beta}-x_{\gamma}$ by the definition of $\sigma$. Therefore, we have $\sigma(j,k) = (\beta,\gamma)=\sigma(i,m)$ which is a contradiction since $\sigma$ is a bijection. 
\end{itemize}

\noindent  Therefore, for all $1\le i\le n$, Equation \ref{eq:perm} is satisfied and there exists  a permutation $\pi$ such that $\sigma(i,j) = ({\pi(i)},{\pi(j)})$.\hfill $\triangleleft$

\vspace*{4 mm}
 Let $P_{\pi}$ be the permutation matrix corresponding to the permutation $\pi$ obtained from the claim above.  To complete the proof, we need to show that $A=P_{\pi} + (v\otimes 1)$ for $v\in\mathbb{F}^n$. 
 			Let 
 			\begin{center}
 				$\ell_1 = a_{11}x_1+a_{12}x_2 + \cdots +  a_{1n}x_n$\\
 				$\ell_2 = a_{21}x_1+a_{22}x_2 + \cdots +  a_{2n}x_n$\\
 				\hspace*{ 14 mm}\vdots\\
 				$\ell_n = a_{n1}x_1+a_{n2}x_2 + \cdots +  a_{nn}x_n$\\
 			\end{center}
 			Now suppose $\pi$ is the identity permutation, i.e., $\sigma(i,j) = (i,j)$ for all $i<j$, therefore 
 			%
 			 $\ell_1-\ell_2=x_1-x_2, \ell_1 - \ell_2 = x_1 - x_2, \dots, \ell_1-\ell_n=x_1-x_n$. Now,  we have the following system of linear equations  
 			\begin{center}
 				$a_{11}-a_{21}=1$, 
 				$a_{12}-a_{22}=-1$,
 				$a_{13}-a_{23}=0, a_{14}-a_{24}=0,\ldots,a_{1n}-a_{2n}=0$\\
  $a_{11}-a_{31}=1$, 
 				$a_{12}-a_{32}=0$, 
 				$a_{13}-a_{33}=-1, a_{14}-a_{34}=0,\ldots,a_{1n}-a_{3n}=0$ \\  
 			\hspace*{ 13 mm}\vdots\\
 				$a_{11}-a_{n1}=1$, 
 				$a_{12}-a_{n2}=0$,
 				$a_{13}-a_{n3}=0, a_{14}-a_{n4}=0,\ldots,a_{1n}-a_{nn}=-1.$
 		\end{center}
 		From the equations above, it follows that when  $\pi$ is the identity permutation, $A-  I = v\otimes 1$  for some $v\in \mathbb{F}^n$ where $1$ is the  vector with all entries as $1$. When $\pi$ is not identity,  it follows from the above arguments that $\pi^{-1}A = I+ v\otimes 1$ for some $v\in \mathbb{F}^n$. Since ${\vd}((I+v\otimes 1)X) = \vd(X)$, we conclude that $\pi \in A_n$.
 		\end{proof}

\noindent Now we show that Vandermonde polynomial are characterized by  its group of symmetry $\mathscr{G}_{\vd}$.
 \begin{lemma}
Let $f\in\mathbb{F}[x_1,\ldots,x_n]$ be a homogeneous polynomial of degree $\binom{n}{2}$. If $\mathscr{G}_{f}=\mathscr{G}_{\vd}$ then $f(x_1,\ldots,x_n)=\alpha\cdot \vd(x_1,\ldots,x_n)$ for some $\alpha\in\mathbb{F}$.
\end{lemma}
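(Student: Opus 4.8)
The plan is to prove the stronger statement that the space of homogeneous polynomials of degree $\binom{n}{2}$ fixed by the group $\mathscr{G}_{\vd}$ is exactly the line $\mathbb{F}\cdot\vd$. Indeed, the hypothesis $\mathscr{G}_f=\mathscr{G}_{\vd}$ says in particular that $f(A\mathbf{x})=f(\mathbf{x})$ for every $A\in\mathscr{G}_{\vd}$, so $f$ lies in that invariant space, and the lemma follows at once from one-dimensionality. So the whole proof is a computation of the degree-$\binom{n}{2}$ part of the invariant ring of $\mathscr{G}_{\vd}$, using the explicit description of $\mathscr{G}_{\vd}$ from Theorem~\ref{thm:group-of-sym}.

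First I would exploit the discrete part $A_n\subseteq\mathscr{G}_{\vd}$, which acts by permuting the variables, so that $f\in\mathbb{F}[x_1,\dots,x_n]^{A_n}$. Since $\mathbb{F}$ has characteristic zero, the ring of $A_n$-invariants splits under the residual $S_n/A_n$-action: the $+1$-eigenspace is $\mathbb{F}[x]^{S_n}$ (the symmetric polynomials) and the $-1$-eigenspace consists of the alternating polynomials, each of which is divisible by $\vd$ with symmetric quotient (as recalled in the introduction). Hence $f=s+\vd\cdot t$ with $s,t$ symmetric, and comparing degrees with $\deg\vd=\binom{n}{2}$ forces $t\in\mathbb{F}$ and $s$ homogeneous symmetric of degree $\binom{n}{2}$; write $f=s+\alpha\,\vd$.

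Next I would use the ``translation'' elements $x_i\mapsto x_i+L$ of $\mathscr{G}_{\vd}$ (for $L$ an arbitrary linear form) guaranteed by Theorem~\ref{thm:group-of-sym}. Each such substitution fixes $\vd$, and it fixes $f$ by hypothesis, so it fixes $s=f-\alpha\,\vd$. Infinitesimally this reads $\sum_{i=1}^n \partial s/\partial x_i=0$; equivalently, with $q_i=x_i-\tfrac1n\sum_j x_j$, the polynomial $s$ is a weighted-homogeneous polynomial (of weight $\binom{n}{2}$) in the translation-invariant symmetric functions $e_2(q_1,\dots,q_n),\dots,e_n(q_1,\dots,q_n)$, where $e_k$ carries weight $k$.

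The final step — which I expect to be the main obstacle — is to deduce $s=0$, whence $f=\alpha\,\vd$. Here one must combine the weight constraint from the previous paragraph with the one remaining feature of $\mathscr{G}_{\vd}$ that has not yet been used: any \emph{nonzero} symmetric $s$ is fixed by the \emph{full} symmetric group, whereas $\mathscr{G}_{\vd}$ contains only even permutations (Theorem~\ref{thm:group-of-sym}). Turning the assumed \emph{equality} $\mathscr{G}_f=\mathscr{G}_{\vd}$ (not merely the inclusion $\mathscr{G}_f\supseteq\mathscr{G}_{\vd}$) against this discrepancy — while checking that the homothety-type matrices that happen to fix any degree-$\binom{n}{2}$ form are already present in $\mathscr{G}_{\vd}$, or else can only give scalars — is what should force the symmetric remainder $s$ to vanish. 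Making this comparison of symmetry groups fully rigorous is where the real content sits, and it is also the place where the precise field and degree hypotheses have to be handled with care.
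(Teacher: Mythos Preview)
Your route diverges from the paper's. The paper extracts from $\mathscr{G}_f=\mathscr{G}_{\vd}$ only the discrete piece $\mathscr{G}_f\cap S_n=A_n$ and then asserts in one line ``hence $f$ is an alternating polynomial''; the fundamental theorem of alternating polynomials gives $f=g\cdot\vd$ with $g$ symmetric, and the degree hypothesis forces $g\in\mathbb{F}$. The continuous (translation) part of $\mathscr{G}_{\vd}$ is never invoked. In your notation the assertion ``$f$ is alternating'' is exactly the statement $s=0$, so the paper treats as immediate precisely the step you single out as the crux, while you take a detour through translation invariance before attempting it.

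Your proposed mechanism for that crux does not work as sketched. You suggest that $s\neq 0$ would contradict $\mathscr{G}_f=\mathscr{G}_{\vd}$ because $s$ is fixed by all of $S_n$ whereas $\mathscr{G}_{\vd}$ contains only $A_n$ among permutations. But the group in play is $\mathscr{G}_f$ with $f=s+\alpha\,\vd$, not the symmetry group of $s$: when $\alpha\neq 0$, an odd permutation sends $f$ to $s-\alpha\,\vd\neq f$ irrespective of $s$, so no element of $\mathscr{G}_f\setminus\mathscr{G}_{\vd}$ is produced this way. (Your idea does dispatch the side case $\alpha=0$, where $f=s$ is genuinely symmetric and odd permutations lie in $\mathscr{G}_f\setminus\mathscr{G}_{\vd}$.) In the main case $\alpha\neq 0$, neither the $S_n$/$A_n$ discrepancy nor the translation-invariance constraint suffices on its own: already at $n=3$ there exists a nonzero translation-invariant symmetric cubic (a scalar multiple of $q_1q_2q_3$), so the constraints you have accumulated do not force $s=0$, and you have not exhibited any matrix fixing $s+\alpha\,\vd$ but not $\vd$ with which to violate the hypothesised equality of groups.
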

\begin{proof}
Let $f\in\mathbb{F}[x_1,\ldots,x_n]$. Since $\mathscr{G}_{f}=\mathscr{G}_{\vd} =  \{ (I + (v\otimes 1) )\cdot P\mid P \in A_n,  v\in\mathbb{F}^n\}$, $\mathscr{G}_{f}\cap S_n = A_n$. Hence $f$ is an alternating polynomial. By the fundamental theorem of alternating polynomials~\cite{GZ05,Mat}, there exists a symmetric polynomial $g\in\mathbb{F}[x_1,\ldots,x_n]$ such that $f(x_1,\ldots,x_n)=g(x_1,\ldots,x_n)\cdot \vd(x_1,\ldots,x_n)$. Since $\deg(f)=\binom{n}{2} = \deg(\vd(x_1,\ldots,x_n))$, $g=\alpha$ for some $\alpha\in\mathbb{F}$.
\end{proof}
Using the description of $\mathscr{G}_{\vd}$ above, we now describe the Lie algebra of $\mathscr{G}_{\vd}$. 

 
 	
 
 
 %

 \begin{lemma}
 	\label{lem:lie-vand}
We have $\mathfrak{g}_{\vd} =  \{v\otimes 1 \mid v\in\mathbb{F}^n\}$.
 \end{lemma}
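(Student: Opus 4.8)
The plan is to compute $\mathfrak{g}_{\vd}$ directly from the definition, using the description of $\mathscr{G}_{\vd}$ from Theorem~\ref{thm:group-of-sym} as a guide. Recall that $\mathfrak{g}_{\vd}$ consists of all $A \in \mathbb{F}^{n\times n}$ with $\vd((I+\epsilon A)\mathbf{x}) = \vd(\mathbf{x})$ where $\epsilon^2 = 0$. First I would verify the inclusion $\{v\otimes 1 \mid v\in\mathbb{F}^n\} \subseteq \mathfrak{g}_{\vd}$: if $A = v\otimes 1$, then $(I + \epsilon A)\cdot x_i = x_i + \epsilon(\sum_j v_j x_j)$, so each difference $(I+\epsilon A)x_i - (I+\epsilon A)x_j = x_i - x_j$ is unchanged, hence $\vd((I+\epsilon A)\mathbf{x}) = \prod_{i<j}(x_i - x_j) = \vd(\mathbf{x})$. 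This direction is immediate.

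For the reverse inclusion, I would use the standard fact that $\mathfrak{g}_f$ can be computed via the logarithmic-derivative / Euler-type identity: $A \in \mathfrak{g}_f$ iff $\sum_{i,j} a_{ij} x_j \frac{\partial f}{\partial x_i} = 0$, which follows by expanding $f((I+\epsilon A)\mathbf{x})$ to first order in $\epsilon$. Applying this to $f = \vd = \prod_{k<l}(x_k - x_l)$, and using $\frac{\partial \vd}{\partial x_i} = \vd \cdot \sum_{l \neq i} \frac{1}{x_i - x_l}$ (interpreting this formally, or better, working in the fraction field), the condition becomes $\sum_{i} \left(\sum_j a_{ij} x_j\right)\left(\sum_{l\neq i} \frac{1}{x_i - x_l}\right) = 0$. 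Clearing denominators and examining the resulting polynomial identity, I would extract constraints on the entries $a_{ij}$. The cleanest route is probably to test this identity on suitable substitutions: for instance, isolating the pole structure at $x_i = x_l$ forces, for each pair $i \neq l$, the residue condition $\sum_j a_{ij} x_j + \sum_j a_{lj} x_j$-type relation — more precisely, the coefficient of $\frac{1}{x_i - x_l}$ (after grouping the $i$ and $l$ terms) must vanish, yielding $\sum_j (a_{ij} - a_{lj}) x_j = $ something symmetric, which upon matching coefficients of each $x_j$ gives $a_{ij} - a_{lj} = c$ independent of the pair $\{i,l\}$ when $j \notin \{i,l\}$, and the appropriate analogue when $j \in \{i,l\}$. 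Carrying this bookkeeping through should show that $A$ has the form: all rows equal up to the relations forcing $a_{ii} - a_{ll}$ and off-diagonal differences to behave like those of $v \otimes 1$ plus a multiple of identity, i.e. $A = \lambda I + v\otimes 1$ for some scalar $\lambda$ and vector $v$.

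The final step handles the leftover scalar: I must show $\lambda = 0$, i.e. $I \notin \mathfrak{g}_{\vd}$ unless absorbed. Indeed $(I + \epsilon I)\mathbf{x} = (1+\epsilon)\mathbf{x}$, so $\vd((1+\epsilon)\mathbf{x}) = (1+\epsilon)^{\binom n2}\vd(\mathbf{x}) = (1 + \binom n2 \epsilon)\vd(\mathbf{x}) \neq \vd(\mathbf{x})$ since $\binom n2 \neq 0$ for $n \geq 2$; so the identity component is excluded and $A = v\otimes 1$. Alternatively, and perhaps more cleanly, I would avoid fractions entirely by differentiating $\log \vd$ only heuristically and instead argue: write $\vd((I+\epsilon A)\mathbf{x}) = \prod_{k<l}\big((x_k - x_l) + \epsilon \sum_j(a_{kj} - a_{lj})x_j\big) = \vd(\mathbf{x}) + \epsilon \sum_{k<l}\big(\sum_j(a_{kj}-a_{lj})x_j\big)\prod_{(p,q)\neq(k,l)}(x_p - x_q)$, and the vanishing of the $\epsilon$-coefficient means $\vd \mid \sum_{k<l}(\cdots)$, forcing each term $\sum_j(a_{kj}-a_{lj})x_j$ to be divisible by $(x_k - x_l)$ (by comparing which factors are missing), hence $\sum_j(a_{kj}-a_{lj})x_j = \mu_{kl}(x_k - x_l)$; a short consistency check on these scalars $\mu_{kl}$ over all pairs, together with the vanishing of the total sum $\sum_{k<l}\mu_{kl}\cdot(\text{full product}) = \sum_{k<l}\mu_{kl}\,\vd$ wait — more carefully, $\sum_{k<l}\mu_{kl}(x_k-x_l)\prod_{(p,q)\neq(k,l)}(x_p-x_q) = \big(\sum_{k<l}\mu_{kl}\big)\vd$ must be zero, so $\sum_{k<l}\mu_{kl} = 0$. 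Solving $a_{kj} - a_{lj} = 0$ for $j \notin\{k,l\}$ and $a_{kk} - a_{lk} = \mu_{kl}$, $a_{kl}-a_{ll} = \mu_{kl}$ across all pairs then pins $A$ down to $v\otimes 1$. I expect the main obstacle to be the combinatorial bookkeeping that derives the rigid form of $A$ from the divisibility/residue conditions — keeping track of the diagonal versus off-diagonal entries and verifying the scalars $\mu_{kl}$ are forced to be the consistent differences $v_k - v_l$ — rather than anything conceptually deep.
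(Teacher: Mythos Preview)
Your proposal is correct, and your second (direct-expansion) approach is precisely the paper's argument, only worked out in more detail: the paper expands $\vd((I+\epsilon A)\mathbf{x})=\prod_{i>j}\big((x_i-x_j)+\epsilon(A(x_i)-A(x_j))\big)$ and asserts in one line that this equals $\vd(\mathbf{x})$ iff $A(x_i)=A(x_j)$ for all $i\neq j$, whereas you correctly isolate the intermediate divisibility step $A(x_k)-A(x_l)=\mu_{kl}(x_k-x_l)$ and then eliminate the scalars via $\sum_{k<l}\mu_{kl}=0$. Your first (pole/residue) route would also work but is heavier than needed here.
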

 \begin{proof}
  We have
  \begin{eqnarray*}
A\in \mathfrak{g}_{\vd} &\iff&  \prod_{i>j}(x_i - x_j + \epsilon (A(x_i)-A(x_j))) = \prod_{i> j }(x_i - x_j)\\
&\iff & A(x_i) = A(x_j) ~~\forall i \neq j \\
&\iff& A = v \otimes 1 ~\mbox{ for some } v \in \mathbb{F}^n.\hspace*{60mm} \qedhere
	 \end{eqnarray*} 
\end{proof}
 
 
 \begin{defn}{\em (Simple Lie Algebra).}
 	A lie algebra $\mathfrak{g}$ is said to be simple if it is a non-abelian lie algebra whose only ideals are $\{0\}$ and $\mathfrak{g}$ itself.
 \end{defn}

 \begin{corollary}
 	$\mathfrak{g}_{\vd}= \{v\otimes 1 \mid v\in\mathbb{F}^n\}$ is a simple Lie Algebra.
 \end{corollary}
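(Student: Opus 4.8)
\emph{Proof idea.} By Lemma~\ref{lem:lie-vand}, $\mathfrak{g}_{\vd}=\{v\otimes 1\mid v\in\mathbb{F}^n\}$ is an $n$-dimensional linear subspace of $\mathbb{F}^{n\times n}$, parametrised linearly by the vector $v$; I will write $A_v:=v\otimes 1$ and $\sigma(v):=\sum_{i=1}^n v_i$. The plan is the standard template for establishing simplicity: first check that $\mathfrak{g}_{\vd}$ is non-abelian, and then argue that every nonzero ideal of $\mathfrak{g}_{\vd}$ coincides with $\mathfrak{g}_{\vd}$.

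For non-abelianness I would first read off the structure constants. A one-line matrix multiplication gives $A_uA_v=\sigma(u)\,A_v$ (up to the transpose convention hidden in $v\otimes 1$, which only flips a sign), hence $[A_u,A_v]=\sigma(u)A_v-\sigma(v)A_u=A_{\sigma(u)v-\sigma(v)u}$. Since $\sigma(e_1)=\sigma(e_2)=1$ this gives $[A_{e_1},A_{e_2}]=A_{e_2-e_1}\neq 0$, so $\mathfrak{g}_{\vd}$ is non-abelian, as required by the definition.

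Next, let $\mathfrak{I}\neq\{0\}$ be an ideal and fix some $0\neq A_v\in\mathfrak{I}$. The core step is the implication: \emph{if $\sigma(v)\neq 0$ then $\mathfrak{I}=\mathfrak{g}_{\vd}$}. Indeed, for each $i\in[n]$ we have $[A_v,A_{e_i}]=\sigma(v)A_{e_i}-A_v\in\mathfrak{I}$, hence $\sigma(v)A_{e_i}=[A_v,A_{e_i}]+A_v\in\mathfrak{I}$; dividing by $\sigma(v)\neq 0$ gives $A_{e_1},\dots,A_{e_n}\in\mathfrak{I}$, and since these span $\mathfrak{g}_{\vd}$ we conclude $\mathfrak{I}=\mathfrak{g}_{\vd}$. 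It then remains to handle a nonzero ideal $\mathfrak{I}$ contained in the hyperplane $\mathfrak{n}:=\{A_v:\sigma(v)=0\}$: for such an ideal I would try to produce, by taking iterated brackets against elements of $\mathfrak{g}_{\vd}$, some $A_w\in\mathfrak{I}$ with $\sigma(w)\neq 0$, and then invoke the core step.

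This last case is where I expect the genuine difficulty to sit. The subspace $\mathfrak{n}$ is itself an abelian ideal of $\mathfrak{g}_{\vd}$ — for $\sigma(v)=0$ one computes $[A_v,A_u]=-\sigma(u)A_v\in\mathfrak{n}$ — and more strongly every linear subspace of $\mathfrak{n}$ is stable under the adjoint action of $\mathfrak{g}_{\vd}$ and hence is an ideal. Thus no chain of brackets starting inside $\mathfrak{n}$ ever escapes $\mathfrak{n}$, the ``manufacture'' step above cannot succeed as stated, and the argument I have described establishes only that the ideals of $\mathfrak{g}_{\vd}$ are $\mathfrak{g}_{\vd}$ itself together with the subspaces of $\mathfrak{n}$. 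Overcoming this case is the main obstacle: it is not a routine technicality, since $\mathfrak{n}$ (and each of its subspaces) is a genuine ideal, so some additional hypothesis on $n$ or $\mathbb{F}$, or a modified formulation of the claim, appears to be needed to carry the corollary through as literally stated.
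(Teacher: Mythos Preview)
Your analysis is correct, and in fact it exposes a genuine error in the paper rather than a gap in your own argument. Your bracket computation $[A_u,A_v]=A_{\sigma(u)v-\sigma(v)u}$ (up to sign) is right, and from it your observation that every linear subspace of $\mathfrak{n}=\{A_v:\sigma(v)=0\}$ is an ideal is immediate: for $\sigma(v)=0$ one has $[A_u,A_v]=\pm\sigma(u)A_v\in\mathbb{F}A_v$. In particular $\mathfrak{n}$ itself is a nonzero proper abelian ideal for every $n\ge 2$, so $\mathfrak{g}_{\vd}$ is \emph{not} simple; it is in fact solvable, with derived series $\mathfrak{g}_{\vd}\supset\mathfrak{n}\supset 0$.

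The paper's proof fails precisely at the sentence ``Observe that $\{e_1\otimes 1,\ldots,e_n\otimes 1\}$ is a basis for $[\mathfrak{g},I]$.'' This is asserted without justification and is false whenever $I\subseteq\mathfrak{n}$: in that case $[\mathfrak{g},I]\subseteq I\subseteq\mathfrak{n}$, which never contains $e_1\otimes 1$ since $\sigma(e_1)=1$. So the dimension count $n=\dim[\mathfrak{g},I]\le\dim I$ collapses. Your ``core step'' recovers exactly the true content of the paper's argument, namely that any ideal containing some $A_v$ with $\sigma(v)\neq 0$ is all of $\mathfrak{g}_{\vd}$; the remaining case cannot be salvaged because, as you show, it genuinely produces nontrivial ideals. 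The corollary as stated is therefore incorrect.
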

 \begin{proof}
 	Let $\mathfrak{g}=\mathfrak{g}_{\vd}$. Suppose , let $I\subseteq\mathfrak{g}$ such that $I\neq\{0\}$ and $I\neq\mathfrak{g}$. Define the Lie bracket $$[\mathfrak{g},I]=\{[A,B]\mid A\in\mathfrak{g},B\in I\} = \{AB-BA \mid A\in\mathfrak{g},B\in I\}\subseteq I$$ 
 	Observe that $\{e_1\otimes 1,e_2\otimes 1,\ldots,e_n\otimes 1\}$ is a basis for $[\mathfrak{g},I]$. Since $[\mathfrak{g},I] \subseteq I$ we have $n=dim([\mathfrak{g},I])\leq dim(I)$. Also $I\subseteq \mathfrak{g}$ implies that $dim(I)\leq dim(\mathfrak{g}) = n$. Hence $dim(I)=n$. As $I$ is a subspace of the vector space $\mathfrak{g}$, and the $dim(I)=dim(\mathfrak{g})$ we have $I=\mathfrak{g}$. 
 	 \end{proof}

\section{Models of Computation}
\label{sec:models}
In this section we study polynomials that can be represented as projections of Vandermonde polynomials. Recall the definitions of the classes  $\vd,\vd_{\proj},\vd_{\homo}$ and $\vd_{\aff}$ from Section \ref{sec:prelim}. For any arithmetic model of computation, universality and closure under addition and multiplication are among the most fundamental and necessary properties to be investigated. Here, we study these properties for projections of the Vandermonde polynomial and their sums. Most of the proofs follow from elementary arguments and can be found in the Appendix.

By definition,  $\vd,\vd_{\proj},\vd_{\homo} \subseteq \vd_{\aff}$. Also, any polynomial with at least one irreducible non-linear factor cannot be written as a projection of $\vd$. As expected, we observe that there are products of linear forms that cannot cannot be written as a projection of $\vd$.

\begin{lemma}
\label{lem:vd-univ}
Let $(x_1-y_1)(x_2-y_2)\not\in \vd_{\aff}$.
\end{lemma}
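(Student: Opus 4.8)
The plan is to show that $(x_1-y_1)(x_2-y_2)$ cannot be obtained from $\vd_n$ by substituting affine linear forms for the variables, for any $n$. Since $\vd_n$ is homogeneous of degree $\binom{n}{2}$, a projection $\vd_n(\ell_1,\ldots,\ell_n)$ with the $\ell_i$ affine has degree at most $\binom{n}{2}$; to even have a chance of producing a degree-$2$ polynomial we need $\binom{n}{2} \geq 2$, so $n \geq 3$, and in fact the product of the $\binom{n}{2}$ pairwise differences $(\ell_i-\ell_j)$ must collapse to a product of exactly two nonconstant linear factors. So the first step is to analyze when $\prod_{i<j}(\ell_i - \ell_j)$, as a product of affine linear forms, can equal $c\cdot(x_1-y_1)(x_2-y_2)$ for a nonzero constant $c$.

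The key structural observation is that $\prod_{i<j}(\ell_i-\ell_j)$ is a product of $\binom{n}{2}$ affine linear forms (the differences $\ell_i - \ell_j$), and by unique factorization this must be, up to scalars, a product of exactly two nonassociate nonconstant linear forms — namely $x_1-y_1$ and $x_2-y_2$ — together with nonzero constants accounting for the remaining $\binom{n}{2}-2$ factors. So each difference $\ell_i - \ell_j$ is either a nonzero constant, or a scalar multiple of $x_1-y_1$, or a scalar multiple of $x_2-y_2$; and exactly one difference (up to scalars) equals $x_1-y_1$ and exactly one equals $x_2-y_2$ (the multiplicities must each be exactly one since $(x_1-y_1)^2$ does not divide the target). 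The next step is to derive a contradiction from this: consider the "difference graph" on vertices $\{1,\ldots,n\}$ where we color the edge $\{i,j\}$ by which of the three types $\ell_i-\ell_j$ falls into. If $\ell_i - \ell_j$ is constant, then $\ell_i$ and $\ell_j$ differ by a constant; define an equivalence relation $i \sim j$ whenever $\ell_i - \ell_j$ is constant. Within an equivalence class all differences are constant; between two classes all differences are "the same" nonconstant form up to scalar (since if $\ell_i - \ell_j \sim x_1-y_1$ and $\ell_i - \ell_{j'} \sim x_2-y_2$ with $j \sim j'$, then $\ell_j - \ell_{j'} = (\ell_i - \ell_{j'}) - (\ell_i-\ell_j)$ would be a nonconstant combination of $x_1-y_1$ and $x_2-y_2$, hence not of the allowed type — forcing the number of classes to be at most... ). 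Working this out: with $k$ equivalence classes we get $\binom{k}{2}$ "inter-class types" each of which must be a scalar multiple of one of the two forms, and all differences between a fixed pair of classes share one type; one then checks $k \leq 2$ is forced (three classes would give three pairwise-distinct between-class forms that cannot all lie in a $2$-dimensional span in the required way, or would create a form outside $\{x_1-y_1, x_2-y_2\}$). But $k \leq 2$ classes means at most $\binom{2}{2}=1$ nonconstant factor type appears, so $(x_1-y_1)(x_2-y_2)$ with its two distinct irreducible factors cannot be produced.

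Concretely, I would run the argument as: (1) reduce to $\prod_{i<j}(\ell_i-\ell_j) = c(x_1-y_1)(x_2-y_2)$; (2) note every $\ell_i - \ell_j$ lies in $\{\text{const}\} \cup \mathbb{F}^*(x_1-y_1) \cup \mathbb{F}^*(x_2-y_2)$; (3) let the equivalence classes under "differ by a constant" be $A_1,\ldots,A_k$, so picking representatives $\ell^{(1)},\ldots,\ell^{(k)}$, the nonconstant differences are exactly the $\ell^{(s)} - \ell^{(t)}$ for $s<t$, up to scalars and additive constants, each needing to be proportional to $x_1-y_1$ or $x_2-y_2$; (4) if $k \geq 3$, say $\ell^{(1)}-\ell^{(2)} \parallel x_1-y_1$ and $\ell^{(1)}-\ell^{(3)} \parallel x_2-y_2$, then $\ell^{(2)}-\ell^{(3)}$ is a nontrivial linear combination of $x_1-y_1$ and $x_2-y_2$ and hence involves all four of $x_1,y_1,x_2,y_2$ (or at least is proportional to neither single form) — contradiction; whereas if all three of $\ell^{(1)},\ell^{(2)},\ell^{(3)}$ are pairwise-proportional to the same form, they'd lie in one class, contradiction; (5) hence $k \leq 2$, giving at most one irreducible nonconstant factor in the product, contradicting that $(x_1-y_1)(x_2-y_2)$ has two. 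Finally, note $n \leq$ anything is fine since the argument is uniform in $n$; and separately handle the trivial small cases $n=1,2$ where $\vd_n$ has degree $0$ or $1$.

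The main obstacle I anticipate is step (4): carefully ruling out all configurations of three equivalence classes and making sure the "linear combination of $x_1-y_1$ and $x_2-y_2$ is proportional to neither" claim is airtight — in particular one must use that $x_1, y_1, x_2, y_2$ are genuinely four distinct variables, so $\alpha(x_1-y_1) + \beta(x_2-y_2)$ with $\alpha,\beta$ both nonzero is not a scalar multiple of either $x_1-y_1$ or $x_2-y_2$ (it mentions all four variables), and is nonconstant. A clean way to phrase the whole thing is via the observation that the affine linear forms $\ell_1,\ldots,\ell_n$ span (affinely) a space of dimension at most... and that the set of pairwise differences $\{\ell_i-\ell_j\}$, being contained in a $2$-dimensional linear space $\operatorname{span}(x_1-y_1, x_2-y_2)$ but required to hit only the two coordinate-like directions, forces the $\ell_i$'s into at most two "parallel bundles" — this is essentially the combinatorial core and the place to be most careful.
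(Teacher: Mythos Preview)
Your equivalence-class approach is sound in outline and ultimately reaches the same contradiction as the paper, but one step is wrong and the whole argument is more elaborate than needed. The error is in step~(4): the claim ``if all three pairwise differences are proportional to the same form, they'd lie in one class'' is false---your classes are defined by \emph{constant} differences, so representatives with $\ell^{(1)}-\ell^{(2)}=\alpha(x_1-y_1)$ and $\ell^{(2)}-\ell^{(3)}=\beta(x_1-y_1)$ (both $\alpha,\beta\neq 0$) genuinely sit in three distinct classes. The clean fix is to replace the case split by a count: by unique factorization and multiplicity, exactly two of the $\binom{n}{2}$ differences $\ell_i-\ell_j$ are nonconstant, and every between-class pair contributes a nonconstant difference, so $\sum_{s<t}|A_s|\,|A_t|=2$. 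This already forces $k=2$ with class sizes $\{1,2\}$ (hence $n=3$), and then your step~(5) finishes: the two nonconstant differences differ by the within-class constant, hence have the same homogeneous linear part, so they cannot be scalar multiples of the two linearly independent forms $x_1-y_1$ and $x_2-y_2$.

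The paper dispenses with the class structure altogether. It simply names indices with $\ell_i-\ell_j=x_1-y_1$ and $\ell_{i'}-\ell_{j'}=x_2-y_2$, observes that the cross-differences $\ell_{i'}-\ell_i$ and $\ell_{j'}-\ell_i$ must be constants (only two nonconstant factors are available among all the $\ell_a-\ell_b$), and writes $\ell_{i'}-\ell_{j'}=(\ell_{i'}-\ell_i)-(\ell_{j'}-\ell_i)$, a difference of constants---contradiction. This is exactly the contradiction buried in your $k=2$ case, extracted in two lines. Your structural viewpoint would be the right tool for classifying \emph{which} products of linear forms lie in $\vd_{\aff}$, but for this one-off lemma the direct argument is cleaner.
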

\begin{proof}
Suppose $f\in\vd_{\aff}$, then there  are affine linear forms $\ell_1,\ldots, \ell_n$ such that $(x_1-y_1)(x_1-y_2) = \prod_{\substack{1\le i< j\le n}}(\ell_i-\ell_j)$. Clearly, only two factors of $\prod_{1 \leq i< j \leq n}(\ell_i-\ell_j) $ are non constant polynomials.
Without loss of generality, let $\ell_i-\ell_j=x_1-y_1$ and $\ell_{i'}-\ell_{j'}=x_2-y_2$. Then,  we must have $\ell_{i'}-\ell_i$, $\ell_{j'}-\ell_j$, $\ell_{i'}-\ell_{j}$ and $\ell_{j'}-\ell_i$  as constant polynomials, as they are factors of $\vd(\ell_1,\ldots ,\ell_n)$ and hence $\ell_{i'}-\ell_{j'} = \ell_{i'}-\ell_i -(\ell_{j'} - \ell_i)$ is a constant, which is a contradiction. 
\end{proof}

\begin{lemma}
\label{lem:closure}
The classes $\vd, \vd_{\proj}, \vd_{\hom}$ and $\vd_{\aff}$ are  not closed under addition and multiplication. 
\end{lemma}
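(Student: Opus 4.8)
The plan is to exhibit explicit small counterexamples for each of the three operations, working with the simplest possible members of each class so that the degree count $\binom{n}{2}$ does most of the work. First I would handle closure under multiplication, since Lemma~\ref{lem:vd-univ} already does essentially all of this: note that $x_1-x_2 = \vd(x_1,x_2)$ and $x_3-x_4 = \vd(x_3,x_4)$ both lie in $\vd$ (hence in $\vd_{\proj}$, $\vd_{\homo}$, $\vd_{\aff}$), but their product $(x_1-x_2)(x_3-x_4)$ is ruled out of $\vd_{\aff}$ — and a fortiori out of the three subclasses — by exactly the argument of Lemma~\ref{lem:vd-univ} (it is the case $y_1 = x_2$, $y_2 = x_4$). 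So each of the four classes fails to be closed under multiplication.

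For closure under addition I would argue by degrees. Every nonzero element of $\vd_{\aff}$ obtained from the $n$-variable Vandermonde is a polynomial of degree exactly $\binom{n}{2}$ (its nonconstant factors are among the $\ell_i - \ell_j$), and in particular the members of $\vd$, $\vd_{\proj}$, $\vd_{\homo}$ are homogeneous of degree $\binom{n}{2}$. The cleanest counterexample: take $g = \vd(x_1,x_2,x_3) = (x_1-x_2)(x_1-x_3)(x_2-x_3)$, a degree-$3$ polynomial in the class, and $h = -\vd(x_1,x_2,x_3) + (x_1-x_2) = -(x_1-x_2)(x_1-x_3)(x_2-x_3)$; but it is simpler still to add two Vandermondes of different arity. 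Let $g = \vd(x_1,x_2) = x_1-x_2$ and $h = \vd(x_1,x_2,x_3) = (x_1-x_2)(x_1-x_3)(x_2-x_3)$. Then $g+h$ has degree $3$ but is not homogeneous-equivalent to any Vandermonde of the right degree: if $g+h \in \vd_{\aff}$ it would have to factor into linear forms, yet $g + h = (x_1-x_2)\bigl(1 + (x_1-x_3)(x_2-x_3)\bigr)$ and the second factor $1 + (x_1-x_3)(x_2-x_3)$ is an irreducible quadratic (it is $1 + $ a product of two distinct linear forms, which has no linear factor), so $g+h$ has an irreducible nonlinear factor and hence lies outside $\vd_{\aff} \supseteq \vd \cup \vd_{\proj} \cup \vd_{\homo}$. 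One still needs to double-check that $1 + uv$ is irreducible for distinct linear forms $u,v$: if it had a linear factor $\ell$, then $\ell$ divides $1 + uv$, so setting $\ell = 0$ forces $uv = -1$ on a hyperplane, impossible since $uv$ vanishes wherever $u = 0$, which meets that hyperplane.

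The only genuinely fiddly point is making the counterexamples robust across all four classes simultaneously and being careful that "not closed under addition" for $\vd_{\aff}$ also follows — here I would note that the factorization argument above shows $g+h$ has an irreducible nonlinear factor, and since \emph{every} element of $\vd_{\aff}$ is a product of affine linear forms, $g+h \notin \vd_{\aff}$; the inclusions $\vd, \vd_{\proj}, \vd_{\homo} \subseteq \vd_{\aff}$ then give the statement for all four. For multiplication the same remark applies via Lemma~\ref{lem:vd-univ}. I expect the main obstacle to be purely expository: choosing one pair of witnesses that works uniformly and writing the irreducibility check for $1 + uv$ cleanly, rather than any real mathematical difficulty — the degree and unique-factorization bookkeeping is routine once the witnesses are fixed.
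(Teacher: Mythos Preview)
Your multiplication argument is essentially the paper's: both pick two variable-disjoint copies of $\vd_2$ and invoke Lemma~\ref{lem:vd-univ}. For addition you take a genuinely different route. The paper simply asserts that a sum of two variable-disjoint polynomials is irreducible and hence cannot lie in $\vd_{\aff}$; you instead add $\vd(x_1,x_2)$ and $\vd(x_1,x_2,x_3)$, factor the result as $(x_1-x_2)\bigl(1 + (x_1-x_3)(x_2-x_3)\bigr)$, and argue that the quadratic cofactor is irreducible. Your version is more explicit and arguably more careful, since the blanket claim ``sum of variable-disjoint polynomials is irreducible'' is literally false over $\mathbb{C}$ (witness $x^2+y^2$), so the paper's one-liner needs more justification than it supplies. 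One small point in your sketch: the argument that $1+uv$ has no linear factor $\ell$ tacitly assumes the hyperplane $\{\ell=0\}$ meets $\{u=0\}$, which can fail if the linear part of $\ell$ is proportional to $u$; that case is handled just as easily (then $uv$ restricts to a nonconstant linear function on $\{\ell=0\}$, so cannot be identically $-1$), or you can bypass the geometry entirely by matching coefficients in a hypothetical factorization $1+uv=(au+bv+c)(du+ev+f)$.
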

\begin{proof}
Since sum of any two variable disjoint polynomials is irreducible, it is clear that $\vd, \vd_{\proj}, \vd_{\hom}$ and $\vd_{\aff}$ are  not closed under addition.  For multiplication, take $f_1 = x_1-y_1$, and $f_1=x_2-y_2$. By Lemma~\ref{lem:vd-univ}, $f_1f_2 \notin \vd_{\aff}$ and hence $f_1f_2\notin \vd \cup \vd_{\proj}\cup \vd_{\hom}$. Since $f_1,f_2 \in \vd \cap \vd_{\proj}\cap \vd_{\hom}\cap \vd_{\aff}$, we have that $\vd, \vd_{\proj}, \vd_{\hom}$ and $\vd_{\aff}$  are not closed under multiplication.
\end{proof}

It can also be seen that the classes of polynomials $\vd, \vd_{\proj}, \vd_{\hom}$ and $\vd_{\aff}$ are properly separated from each other:

\begin{lemma}
\label{lem:sep}
\begin{enumerate}
\item[(1)] $\vd_{\proj}\subsetneq \vd_{\aff}$ and $\vd_{\homo}\subsetneq \vd_{\aff}$. 
\item[(2)]   $\vd_{\proj} \not\subset \vd_{\homo}$ and $\vd_{\homo} \not\subset \vd_{\proj}$.   
\end{enumerate}
\end{lemma}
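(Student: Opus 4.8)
\textbf{Plan for proving Lemma~\ref{lem:sep}.}
The plan is to prove the two parts by exhibiting explicit separating polynomials and then invoking structural facts already established (unique factorization, the characterization of $\mathscr{G}_{\vd}$ via Theorem~\ref{thm:group-of-sym}, and Lemma~\ref{lem:vd-univ}). For part (1), the containments $\vd_{\proj}\subseteq\vd_{\aff}$ and $\vd_{\homo}\subseteq\vd_{\aff}$ are immediate from the definitions, since every variable-or-constant substitution and every homogeneous linear form is in particular an affine linear form. So the work is to show the containments are strict. For strictness of $\vd_{\homo}\subsetneq\vd_{\aff}$, I would take a polynomial that is a product of $\binom{n}{2}$ affine (non-homogeneous) linear forms arising as pairwise differences of $n$ generic affine forms $\ell_i = x_i + c$ with the $c$'s chosen so that the differences are nonzero constants — e.g. $\vd(x_1, x_2+1, x_3+2,\dots)$-type examples collapse, so instead pick $\ell_1,\dots,\ell_n$ affine in few variables; concretely, $f=\vd(x_1+1, x_1+2, x_1, \dots)$ is problematic because differences become constants, so the cleanest witness is something like $g = (x_1 - x_2)(x_1 - x_2 - 1)(\text{const}\cdots)$ — actually the simplest route is: $\vd(x_1, x_2, \dots, x_{n-1}, x_{n-1}+1)$ gives a polynomial with a constant factor, hence degree $<\binom{n}{2}$ in a way no homogeneous projection achieves (a homogeneous projection of $\vd$ by linearly independent forms has degree exactly $\binom{n}{2}$ and no constant factors). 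I would argue that any element of $\vd_{\homo}$ is homogeneous, while an appropriate element of $\vd_{\aff}$ is not homogeneous (e.g. $\vd(x_1, x_2, \dots, x_{n-1}, 0) \cdot$-shifted, or simply $(x_1-1)(x_2-1)(x_1-x_2)$ when $n=3$, which lies in $\vd_{\aff}$ via $\ell_1=x_1,\ell_2=x_2,\ell_3=1$ but is not homogeneous hence not in $\vd_{\homo}$). For $\vd_{\proj}\subsetneq\vd_{\aff}$, I would similarly produce a product of differences of \emph{general} affine forms that cannot be realized using only variables and constants: e.g. $\vd(x_1, x_1+x_2, x_1+x_2+x_3,\dots)$ for small $n$, noting that its irreducible factors include forms like $x_2, x_3, x_2+x_3$ which are not of the form $x_i - x_j$ or $x_i - c$; since a $\le_{\proj}$ substitution into $\vd$ produces only factors of the shape $(\rho_i-\rho_j)$ with $\rho_i,\rho_j\in X\cup\mathbb{F}$, i.e. $x_a - x_b$, $x_a - c$, or a constant, a polynomial whose irreducible factorization contains $x_2+x_3$ cannot be in $\vd_{\proj}$.

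For part (2), I would exhibit one polynomial in $\vd_{\proj}\setminus\vd_{\homo}$ and one in $\vd_{\homo}\setminus\vd_{\proj}$. For $\vd_{\proj}\not\subset\vd_{\homo}$: take $f = \vd(x_1,\dots,x_{n-1},c)$ for a nonzero constant $c$; this is in $\vd_{\proj}$ by definition, but it is not homogeneous (the factors $x_i - c$ are affine non-homogeneous), and no element of $\vd_{\homo}$ can be non-homogeneous, so $f\notin\vd_{\homo}$. For $\vd_{\homo}\not\subset\vd_{\proj}$: take $f=\vd(\ell_1,\dots,\ell_n)$ for homogeneous linear forms whose pairwise differences include a form like $x_1 + x_2$ that is not of the shape $x_a - x_b$ — for instance with $n=3$, $\ell_1 = x_1$, $\ell_2 = -x_2$, $\ell_3 = x_3$, giving factors $x_1+x_2$, $x_1 - x_3$, $-x_2-x_3$; since every irreducible factor of a polynomial in $\vd_{\proj}$ is $x_a - x_b$, $\pm(x_a - c)$, or constant, and $x_1+x_2$ is none of these, $f\notin\vd_{\proj}$. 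The key recurring lemma, which I would state and prove once in a sentence, is: if $g\in\vd_{\proj}$ then every irreducible factor of $g$ is either a constant, a form $x_a - x_b$, or a form $\alpha(x_a - c)$ for $a\neq b$, $c\in\mathbb{F}$, $\alpha\in\mathbb{F}^*$ — this follows because $g = \prod_{i<j}(\rho_i - \rho_j)$ with each $\rho_i\in X\cup\mathbb{F}$, so each factor $\rho_i - \rho_j$ is already of one of these forms up to scalar.

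The main obstacle I anticipate is not mathematical depth but bookkeeping: choosing witnesses that are simultaneously (a) manifestly in the larger class, (b) provably not in the smaller class, and (c) valid for general $n$ rather than just $n=3$, since the lemma is stated for the families. I expect to resolve this by giving small explicit examples (say $n=3$ or $n=4$) and then noting that each witness can be "padded" to larger $n$ by appending fresh variables $x_4,\dots,x_n$ as additional homogeneous forms $\ell_4 = x_4,\dots,\ell_n = x_n$, which multiplies the polynomial by an $\vd$-type factor on the new variables without affecting membership in the relevant classes (since the padded product is in $\vd_{\homo}$ resp. $\vd_{\proj}$ iff the original piece is, the new factors being of the allowed shapes). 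The only subtlety to double-check is that appending variables preserves linear independence of the substitution forms — which it does, since the new $x_i$ are independent of the old variables used — and that it does not accidentally create a constant factor, which it does not when the appended forms are distinct fresh variables distinct from all earlier ones and from each other's differences.
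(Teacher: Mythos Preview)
Your approach is correct and is the same strategy as the paper's: exhibit explicit witnesses and argue via (i) the shape of the linear factors of an element of $\vd_{\proj}$ and (ii) the homogeneity of every element of $\vd_{\homo}$. Two points of comparison are worth making. First, the classes $\vd_{\proj},\vd_{\homo},\vd_{\aff}$ are defined as unions over all $n$, so a single separating polynomial suffices; your padding discussion is unnecessary, and the paper simply uses $n=2$ witnesses (degree-$1$ polynomials): for instance $(x_1-y_1)+(x_2-y_2)=\vd(x_1-y_1,\,x_2-y_2)$ lies in $\vd_{\homo}\subseteq\vd_{\aff}$ but not in $\vd_{\proj}$ because a degree-$1$ element of $\vd_{\proj}$ is a scalar multiple of $x_a-x_b$ or $x_a-c$ and hence involves at most two variables; and $x_1+x_2-2=\vd(x_1-1,\,x_2-1)$ lies in $\vd_{\aff}$ but is non-homogeneous, hence not in $\vd_{\homo}$. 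Second, neither Theorem~\ref{thm:group-of-sym} nor Lemma~\ref{lem:vd-univ} is actually used anywhere in your argument (nor in the paper's), so you should drop the references to them. Your ``key recurring lemma'' about the factor shapes in $\vd_{\proj}$ is exactly the right observation and is all you need.
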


\begin{proof}
\begin{itemize}
\item $\vd_{\proj}\subsetneq \vd_{\aff}$ : Let $f=(x_1-y_1)+(x_2-y_2)$.Then  $f = \det \left[ {\begin{array}{cc}
1 & 1 \\
y_2-x_2 & x_1-y_1  
\end{array} } \right]
$.   By comparing factors it can be seen that $(x_1-y_1)+(x_2-y_2)\not\in\vd_{\proj}$. 
\item $\vd_{\homo}\subsetneq \vd_{\aff}$ : Let $f=x_1+x_2-2$.  Then  $f=\det \left[ {\begin{array}{cc}
1 & 1 \\
x_2-1 & x_1-1  
\end{array} } \right]
$. Suppose $f\in\vd_{\homo}$, then 
there exists an $n\times n$ Vandermonde matrix $M'$ such that $f\leq_{\homo} \det(M')$. In other words, 
$x_1+x_2-2 = \prod_{\substack{i< j~~i,j\in[n]}}(\ell_j-\ell_i)$.
where $\ell_i$'s are homogeneous linear forms which is impossible since $x_1+x_2-2$ is non-homogeneous.

\item $\vd_{\homo}\subsetneq\vd_{\proj}$ : Let $f=(x_1-1)(x_1-2)$. Observe that $f\in\vd_{\proj}$.  However, since $\vd_{\homo}$ consists only of polynomials with homogeneous linear factors,  $f\notin\vd_{\homo}$.

\item $\vd_{\proj}\subsetneq\vd_{\homo}$ : Let $f=(x_1-y_1)+(x_2-y_2)$. For $M=\left[ {\begin{array}{cc}
1 & 1 \\
y_2-x_2 & x_1-y_1  
\end{array} } \right]
$, we have $\det(M)\in\vd_{homo}$ and $f=\det(M)$. It can be seen that $(x_1-y_1)+(x_2-y_2)\not\in\vd_{\proj}$. \qedhere  
\end{itemize}
\end{proof}

\subsection*{Sum of projections of Vandermonde polynomials}
In this section, we consider polynomials that can be expressed as sum of projections of Vandermonde polynomials. 
  
\begin{defn} For a class $\mathcal{C}$ of polynomials, let $\Sigma\cdot{\cal C}$  be  defined as
$$\Sigma\cdot{\cal C} = \left \{f ~\vline~ \parbox{4.3in}{$f=(f_n)_{n\ge 0}$ where $\forall n\ge 0~~\exists~g_1,g_2,\ldots,g_s\in{\cal C}$,$\alpha_1,\dots,\alpha_s\in\mathbb{F}$ such that $f~=~\alpha_1 g_1+\alpha_2g_2+\cdots+\alpha_sg_s, s = n^{O(1)}$}\right\}.$$
\end{defn}


\begin{lemma}
\label{lem:sigvd:univ}
$x_1\cdot x_2 \not\in \Sigma\cdot\vd$.
\end{lemma}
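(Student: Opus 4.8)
The plan is to settle this by an elementary homogeneity-and-degree argument, exploiting the fact that the class $\vd$ consists of \emph{unprojected} Vandermonde polynomials and that none of them has degree exactly $2$.

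First I would record the two facts I need. For every $n\ge 1$ the polynomial $\vd(x_1,\ldots,x_n)=\prod_{i<j}(x_i-x_j)$ is homogeneous of degree $\binom n2$; and $2$ is never of this form, since $\binom n2$ is strictly increasing in $n$ with $\binom 22=1<2<3=\binom 32$. Hence every member of the class $\vd$ is a homogeneous polynomial whose degree differs from $2$.

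Next, suppose toward a contradiction that $x_1x_2\in\Sigma\cdot\vd$, so that there are $g_1,\ldots,g_s\in\vd$ and $\alpha_1,\ldots,\alpha_s\in\mathbb{F}$ with $x_1x_2=\sum_{i=1}^s\alpha_i g_i$. I would then extract the homogeneous degree-$2$ component of both sides. On the left this component equals $x_1x_2$ itself, which is nonzero. On the right, each $g_i$ is homogeneous of some degree $\binom{n_i}2\ne 2$, so its degree-$2$ component is $0$, and therefore the degree-$2$ component of the whole right-hand side is $0$. This yields $x_1x_2=0$, a contradiction, and hence $x_1x_2\notin\Sigma\cdot\vd$.

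There is essentially no obstacle in this argument: the only points one must check are the trivial arithmetic fact that $2\neq\binom n2$ for all $n$, and the observation that, because the class $\vd$ involves no substitution of variables or constants, homogeneity is preserved under $\mathbb{F}$-linear combinations (so the polynomial-size bound $s=n^{O(1)}$ plays no role — the statement even holds for the full $\mathbb{F}$-span of $\vd$). The one genuine subtlety worth flagging is that this short argument does \emph{not} carry over to $\Sigma\cdot\vd_{\proj}$ or $\Sigma\cdot\vd_{\aff}$: collapsing a difference $\rho_i-\rho_j$ to a constant can destroy homogeneity and even create genuine degree-$2$ projections, so excluding $x_1x_2$ from those larger classes would require a more refined tool, such as an analysis of the span of low-order partial derivatives or of the multiset of linear factors of each summand.
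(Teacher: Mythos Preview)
Your argument is correct and is essentially the paper's own proof: both observe that every $g\in\vd$ is homogeneous of degree $\binom{n}{2}$, which is never $2$, so the degree-$2$ component of any $\mathbb{F}$-linear combination vanishes. Your write-up is merely more explicit about extracting the homogeneous degree-$2$ part, whereas the paper just says ``$\deg(g_i)\le 1$ or $\deg(g_i)\ge 3$'' and concludes immediately.
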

\begin{proof}
Suppose   there exists $g_1,g_2,\ldots,g_s\in\vd$.  Note that  for every $i$, either $\deg(g_i) \le 1$ or $\deg(g_i) \ge 3$.  Since $\deg(g) = 2$, it is impossible that $x_1x_2 = g_1+ \dots + g_s$ for any $s\ge 0$.  
\end{proof}

\begin{lemma}
The class $\Sigma\cdot\vd$ is closed under addition but not under multiplication.
\begin{itemize}
\item[(i)] If $f_1,f_2\in\Sigma\cdot\vd$ then $f_1+f_2\in\Sigma\cdot\vd$.
\item[(ii)] There exists $f_1,f_2\in\Sigma\cdot\vd$ such that $f_1\cdot f_2 \not\in \Sigma\cdot\vd$
\end{itemize}
\end{lemma}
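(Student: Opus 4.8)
The plan is to prove the two parts separately, with part (i) being essentially immediate and part (ii) requiring a small gadget together with a degree/structure argument in the spirit of Lemma~\ref{lem:sigvd:univ}.

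For part (i), suppose $f_1 = \alpha_1 g_1 + \cdots + \alpha_s g_s$ and $f_2 = \beta_1 h_1 + \cdots + \beta_t h_t$ with $g_i, h_j \in \vd$ and $s,t$ each $n^{O(1)}$. Then $f_1 + f_2$ is the linear combination $\alpha_1 g_1 + \cdots + \alpha_s g_s + \beta_1 h_1 + \cdots + \beta_t h_t$ of the $s+t = n^{O(1)}$ elements of $\vd$, so $f_1 + f_2 \in \Sigma\cdot\vd$ by definition. (One should be slightly careful that the definition is stated for families $f = (f_n)_{n\ge 0}$; the same argument applied at each $n$ works, and the bound $s+t$ is still polynomial.) This step should take only a few lines.

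For part (ii), the natural choice is to exhibit $f_1, f_2 \in \vd \subseteq \Sigma\cdot\vd$ whose product lies outside $\Sigma\cdot\vd$. I would take $f_1 = x_1 - x_2$ and $f_2 = x_3 - x_4$ (each is $\vd_2$ on two variables, hence in $\vd$), so that $f_1 \cdot f_2 = (x_1-x_2)(x_3-x_4)$ is a homogeneous polynomial of degree $2$. The goal is then to show $(x_1-x_2)(x_3-x_4) \notin \Sigma\cdot\vd$. The key observation, exactly as in Lemma~\ref{lem:sigvd:univ}, is that every $g \in \vd$ is a Vandermonde polynomial $\vd_m$ for some $m$, so $\deg(g) = \binom{m}{2} \in \{0,1,3,6,10,\dots\}$; in particular no element of $\vd$ has degree exactly $2$. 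Since a linear combination $\alpha_1 g_1 + \cdots + \alpha_s g_s$ can only produce a polynomial whose degree is among the degrees of the $g_i$ with nonzero coefficient that survive cancellation, and a degree-$2$ homogeneous polynomial can never be obtained as such a combination (any surviving top-degree part must come from $g_i$'s of that common top degree, which cannot be $2$, and lower-degree terms cannot create a degree-$2$ monomial either), we conclude $f_1 f_2 \notin \Sigma\cdot\vd$. I would phrase this by noting that the degree-$2$ homogeneous component of $\sum_i \alpha_i g_i$ is a sum of degree-$2$ homogeneous components of the $g_i$'s, each of which is zero since $\deg(g_i) \ne 2$ forces the degree-$2$ part of $g_i$ to vanish (as each $g_i$ is itself homogeneous).

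The only mild subtlety — and the part I would be most careful about — is the cancellation issue: a priori a sum of higher-degree polynomials could have its top parts cancel and leave a degree-$2$ remainder. But since every $g_i \in \vd$ is \emph{homogeneous} (of degree $\binom{m_i}{2}$), the degree-$2$ graded piece of $\alpha_i g_i$ is literally $0$ unless $\binom{m_i}{2} = 2$, which has no integer solution; hence the degree-$2$ graded piece of the whole sum is $0$, contradicting that $(x_1-x_2)(x_3-x_4)$ has a nonzero degree-$2$ part. This makes the argument robust to cancellation and completes (ii), and hence the lemma.
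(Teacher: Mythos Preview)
Your proof is correct and follows essentially the same approach as the paper: part (i) is immediate from the definition, and part (ii) takes two degree-one elements of $\vd$ (the paper uses $x_1-y_1$ and $x_2-y_2$, you use $x_1-x_2$ and $x_3-x_4$) and observes that no $g\in\vd$ has degree $2$ since $\binom{m}{2}=2$ has no integer solution. Your treatment of the cancellation issue via the homogeneous degree-$2$ graded piece is more explicit than the paper's one-line version, but the underlying argument is identical.
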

\begin{proof}
\begin{itemize}
\item[(i)] Closure under addition follows by definition.
\item[(ii)] Let $f_1=x_1-y_1$ and $f_2 = x_2 -y_2$, clearly $f_1,f_2\in\Sigma\cdot\vd$. since for any $g \in \vd$, $\deg(g) \neq 2$, one can conclude that $f_1f_2 \notin \Sigma\cdot \vd$.    \qedhere \end{itemize}\end{proof}

 We now  consider polynomials in the class $\Sigma\vd_{\proj}$. Any univariate polynomial $f$ of degree $d$ can be computed by depth-$2$ circuits of size $\poly(d)$. However there are univariate polynomials not in $\vd_{\aff}$ which is a subclass of depth $2$ circuits (Consider any univariate polynomial irreducible over $\mathbb{F}$). Here, we show that the class of all univariate polynomials can be computed efficiently by  circuits in $\Sigma\vd_{\proj}$.

\begin{lemma}
\label{lem:univariate}
Let $f=a_0+a_1x+ a_2x^2+ \cdots + a_dx^d$ be a univariate polynomial of degree $d$. Then  there are    $g_i\in\vd_{\proj}$, $1\le i \le s \le O(d^2)$   for some $ \alpha_i\in\mathbb{F}$ such that $f = g_1 + \dots + g_s$.
\end{lemma}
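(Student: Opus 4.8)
The plan is to expand $f$ in the basis of falling factorials and realize each falling factorial, up to a nonzero scalar, as a single element of $\vd_{\proj}$ in which exactly one input slot of the Vandermonde polynomial carries the variable $x$ and the remaining slots carry distinct integer constants. Since the definition of $\Sigma\cdot\vd_{\proj}$ permits arbitrary $\mathbb{F}$-linear combinations of members of $\vd_{\proj}$, producing $f$ as a short linear combination of such falling factorials suffices.

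First I would record the shape of a univariate element of $\vd_{\proj}$. If $g=\vd(\rho_1,\ldots,\rho_n)=\prod_{i<j}(\rho_i-\rho_j)$ with every $\rho_i\in\{x\}\cup\mathbb{F}$, then two slots equal to $x$ force a factor $x-x=0$; hence a nonzero such $g$ has at most one $\rho_i$ equal to $x$, and is therefore a constant times a square-free product of monic linear forms $(x-c)$. Conversely, fix $j\ge 0$, set $n=j+1$, put $\rho_1=x$, and let $\rho_2,\ldots,\rho_{j+1}$ be $0,1,\ldots,j-1$. Splitting $\prod_{1\le i<i'\le j+1}(\rho_i-\rho_{i'})$ into the pairs involving the first slot and the pairs not involving it gives $\vd(\rho_1,\ldots,\rho_{j+1})=\gamma_j\cdot x^{\underline{j}}$, where $x^{\underline{j}}=x(x-1)\cdots(x-j+1)$ is the $j$th falling factorial and $\gamma_j=\prod_{0\le a<b\le j-1}(a-b)$ is the Vandermonde determinant of the chosen constants, hence nonzero. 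Thus $x^{\underline{j}}=\gamma_j^{-1}\,g_j$ with $g_j\in\vd_{\proj}$; this uses only that $0,1,\ldots,j-1\in\mathbb{F}$, so it is valid over both $\mathbb{R}$ and $\mathbb{C}$.

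Next I would invoke the classical identity $x^k=\sum_{j=0}^{k}S(k,j)\,x^{\underline{j}}$, with $S(k,j)$ the Stirling numbers of the second kind, to rewrite $f=\sum_{k=0}^{d}a_kx^k$ as $f=\sum_{j=0}^{d}b_j\,x^{\underline{j}}$ where $b_j=\sum_{k\ge j}a_kS(k,j)$. Substituting $x^{\underline{j}}=\gamma_j^{-1}g_j$ gives $f=\sum_{j=0}^{d}(b_j\gamma_j^{-1})\,g_j$, a linear combination of $d+1$ elements of $\vd_{\proj}$; in particular $s\le O(d^2)$ as claimed. If one prefers not to regroup across monomials, then expanding $a_kx^k$ separately already exhibits $f$ as a linear combination of $\sum_{k=0}^{d}(k+1)=O(d^2)$ members of $\vd_{\proj}$ directly.

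There is no substantial obstacle here beyond bookkeeping; the only points requiring care are (i) the observation that at most one projection slot may carry the variable, so that every nonzero univariate element of $\vd_{\proj}$ is square-free up to a scalar and the (square-free) falling factorials are the natural building blocks, and (ii) the verification that the constant factor $\gamma_j$ arising from the chosen projection is nonzero, so that each $x^{\underline{j}}$ genuinely lies in the $\mathbb{F}$-span of $\vd_{\proj}$. Both follow at once from the product formula for $\vd$, and the Stirling-number change of basis is standard.
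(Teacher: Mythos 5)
Your proof is correct, and it takes a genuinely different route from the paper's. The paper proceeds by an iterative elimination: it takes determinants of Vandermonde matrices of sizes $(d+1),d,(d-1),\ldots$ whose first column is $(1,x,\ldots)^T$ and whose remaining columns are powers of arbitrary distinct constants $\beta_i$, then chooses the scalars $\alpha_i$ one at a time so that the partial sums successively cancel the coefficients of $x^{d-1},x^{d-2},\ldots$; this produces a single monomial $a_dx^d$ with $d$ summands and hence $f$ with $O(d^2)$ summands. You instead identify an explicit basis adapted to $\vd_{\proj}$: setting one projection slot to $x$ and the others to $0,1,\ldots,j-1$ gives $\vd(\rho_1,\ldots,\rho_{j+1})=\gamma_j\,x(x-1)\cdots(x-j+1)$ with $\gamma_j\neq 0$ (the Vandermonde determinant of the constants), and then the Stirling-number change of basis $x^k=\sum_j S(k,j)x^{\underline{j}}$ writes $f$ directly as a linear combination of $d+1$ elements of $\vd_{\proj}$. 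Your observation that a nonzero univariate element of $\vd_{\proj}$ can carry the variable in at most one slot (else a factor $x-x=0$ appears) correctly motivates the square-free falling factorials as building blocks, and the only degenerate case, $j=0$, is handled since the $1\times 1$ Vandermonde determinant is the constant $1$. What your route buys is a cleaner, non-iterative argument with explicit coefficients and a sharper bound $s\le d+1=O(d)$, strictly better than the $O(d^2)$ demanded by the statement; what the paper's route buys is flexibility in the choice of constants $\beta_i$ (no specific evaluation points needed, only distinctness) and a template that it reuses verbatim for ${\sf Pow}_{n,d}$ in the subsequent corollary, which your argument also supports since it applies variable by variable.
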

\begin{proof}
Consider the $(d+1)\times (d+1)$ Vandermonde matrix $M_0$,
\begin{equation}
\nonumber M_0 =  \left[ {\begin{array}{ccccc} 
 1 & 1 & 1 & 1 & 1 \\
 x & \beta_1 & \cdots & \cdots & \beta_{d-1}\\
x^2 & \beta_1^2 & \cdots &\cdots & \beta_{d-1}^2\\
\vdots & \vdots & \vdots & \vdots & \vdots \\
\vdots & \vdots & \vdots & \vdots & \vdots \\
x^{d-1} & \beta_1^{d-1} & \cdots &\cdots & \beta_{d-1}^{d-1}\\
x^{d} & \beta_1^{d} & \cdots &\cdots & \beta_{d-1}^{d}
\end{array} } \right]
\end{equation}

Let   $g_0 = \det(M_0) = \gamma_{00} + \gamma_{01}x + \gamma_{02}x^2 + \cdots + \gamma_{0,d-1}x^{d-1} + \gamma_{0d}x^d$ where $\gamma_{00},\ldots,\gamma_{0d}\in\mathbb{F}$ and $\gamma_{0d} \neq 0$. Note that $g_0\in \vd_{\proj}$.   Setting  $\alpha_0 = \frac{a_d}{\gamma_{0d}}$ we get    $\alpha_0f_0 = a_dx^d + \frac{a_d\gamma_{0,d-1}}{\gamma_{0d}}x^{d-1} + \cdots + \frac{a_d\gamma_{01}}{\gamma_{0d}}x + \frac{a_d\gamma_{00}}{\gamma_{0d}}.$  Now,  let $M_1$ be the  $d\times d$ Vandermonde matrix, 
\begin{equation}
\nonumber M_1 =  \left[ {\begin{array}{ccccc} 
 1 & 1 & 1 & 1 & 1 \\
 x & \beta_1 & \cdots & \cdots & \beta_{d-1}\\
x^2 & \beta_1^2 & \cdots &\cdots & \beta_{d-1}^2\\
\vdots & \vdots & \vdots & \vdots & \vdots \\
\vdots & \vdots & \vdots & \vdots & \vdots \\
x^{d-1} & \beta_1^{d-1} & \cdots &\cdots & \beta_{d-1}^{d-1}\\
\end{array} } \right]
\end{equation}
Then 
$$g_1 = \det(M_1) = \gamma_{10} + \gamma_{11}x + \gamma_{12}x^2 + \cdots + \gamma_{1,d-1}x^{d-1}.$$ where $\gamma_{10},\ldots,\gamma_{1d}\in\mathbb{F}$. Observe that $x^d$ is not a monomial in $\alpha_1g_1$. Set $\alpha_1 = \frac{a_{d-1}}{\gamma_{1,d-1}}- \frac{a_d\gamma_{0,d-1}}{\gamma_{0d}} $. Then $\alpha_1g_1 = a_{d-1}x^{d-1} + \cdots + \frac{a_d\gamma_{01}}{\gamma_{0d}}x + \frac{a_d\gamma_{00}}{\gamma_{0d}}.$
Extending this approach : Let $M_i$ be a $(d-(i-1))\times(d-(i-1))$ Vandermonde matrix,

\begin{equation}
\nonumber M_i =  \left[ {\begin{array}{ccccc} 
 1 & 1 & 1 & 1 & 1 \\
 x & \beta_1 & \cdots & \cdots & \beta_{d-i}\\
x^2 & \beta_1^2 & \cdots &\cdots & \beta_{d-i}^2\\
\vdots & \vdots & \vdots & \vdots & \vdots \\
\vdots & \vdots & \vdots & \vdots & \vdots \\
x^{d-i} & \beta_1^{d-i} & \cdots &\cdots & \beta_{d-i}^{d-i}\\
\end{array} } \right]
\end{equation}
Now observe that by setting $\alpha_i = \frac{a_{d-i}}{\gamma_{i,d-i}}- (\alpha_0\gamma_{0,d-i} + \alpha_1\gamma_{1,d-i} + \cdots + \alpha_{i-1}\gamma_{i-1,d-i}) $ we ensure that
$\sum_{j=0}^{i}\alpha_jg_j$ does not contain any term of the form $x^p$ for $d-i \leq p \leq d-1$. Thus  $\sum_{k=0}^{d}\alpha_kg_k = a_dx^d$. Hence to compute  $a_dx^d$ we require $d$ summands. Then, using  $O(d^2)$ summands  $f$ can be obtained.
\end{proof}


%
%

Recall that the $n$-variate power symmetric polynomial of degree $d$ is defined as ${\sf Pow_{n,d}} = x_1^d+x_2^d+\cdots + x_n^d.$
From the arguments in  Lemma~\ref{lem:univariate},  it follows that ${\sf Pow_{n,d}}$ can be expressed by polynomial size circuits in $\Sigma\cdot\vd_{\proj}$.

\begin{corollary}
\label{cor:pow-sym}
There are polynomials $f_i\in\vd_{\proj}$ $1 \le i \le nd$ such that 
${\sf Pow_{n,d}}= \sum_{i=1}^s\alpha_if_i$. 
\end{corollary}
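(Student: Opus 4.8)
The plan is to reduce the statement about $\mathsf{Pow}_{n,d}$ directly to the univariate construction already carried out in Lemma~\ref{lem:univariate}. The key observation is that $\mathsf{Pow}_{n,d} = x_1^d + x_2^d + \cdots + x_n^d$ is simply a sum of $n$ univariate monomials, each of which is $x_j^d$ for a single variable $x_j$. So if I can write each monomial $x_j^d$ as a sum of at most $d$ elements of $\vd_{\proj}$ (with scalar coefficients), then concatenating these $n$ sums gives $\mathsf{Pow}_{n,d}$ as a sum of at most $nd$ elements of $\vd_{\proj}$, which is exactly the claimed bound.

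First I would recall the heart of the proof of Lemma~\ref{lem:univariate}: there, to produce the single top-degree monomial $a_d x^d$ one uses exactly $d$ summands $\alpha_0 g_0, \alpha_1 g_1, \ldots, \alpha_{d-1} g_{d-1}$ (the $g_i = \det(M_i)$ being projections of Vandermonde determinants on the variable $x$ together with field constants $\beta_1,\ldots,\beta_{d-i}$), the telescoping choice of the $\alpha_i$ cancelling all lower-degree terms. Applying this verbatim with the variable $x$ replaced by $x_j$ yields constants $\alpha^{(j)}_0,\ldots,\alpha^{(j)}_{d-1}$ and polynomials $g^{(j)}_0,\ldots,g^{(j)}_{d-1} \in \vd_{\proj}$ with $\sum_{i=0}^{d-1}\alpha^{(j)}_i g^{(j)}_i = x_j^d$. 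Note each $g^{(j)}_i$ is a Vandermonde determinant in which exactly one column is the vector $(1,x_j,x_j^2,\ldots)^T$ and the remaining columns are constant vectors built from the $\beta$'s, so indeed $g^{(j)}_i \in \vd_{\proj}$ as required.

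Then I would simply sum over $j$:
\begin{equation}
\mathsf{Pow}_{n,d} = \sum_{j=1}^{n} x_j^d = \sum_{j=1}^{n}\sum_{i=0}^{d-1} \alpha^{(j)}_i\, g^{(j)}_i,
\end{equation}
which exhibits $\mathsf{Pow}_{n,d}$ as a linear combination of $nd$ polynomials from $\vd_{\proj}$, renaming the double-indexed family $\{g^{(j)}_i\}$ as $f_1,\ldots,f_s$ with $s \le nd$ and the coefficients $\alpha^{(j)}_i$ as $\alpha_1,\ldots,\alpha_s$. This matches the statement $\mathsf{Pow}_{n,d} = \sum_{i=1}^{s}\alpha_i f_i$ with $1 \le i \le nd$.

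Honestly, there is no real obstacle here: the corollary is a routine packaging of Lemma~\ref{lem:univariate}, and the only thing to double-check is the bookkeeping that $d$ summands suffice for each pure power $x_j^d$ (not $O(d^2)$ — the $O(d^2)$ in the lemma accounts for all $d+1$ coefficients of a general univariate polynomial, whereas here each $x_j^d$ is already a monomial). So the mild subtlety worth a sentence is making explicit that only the "produce the leading monomial" part of the lemma's argument is invoked, giving $d$ rather than $d^2$ terms per variable, hence the total bound $nd$.
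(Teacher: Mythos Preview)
Your proposal is correct and mirrors exactly what the paper does: the corollary is not given a separate proof but is stated as an immediate consequence of the arguments in Lemma~\ref{lem:univariate}, and you have simply spelled that out---apply the leading-monomial construction of the lemma to each $x_j^d$ and sum over $j$. (The same off-by-one that appears in the paper's own count is present here: cancelling the constant term requires one extra summand $g_d$, so strictly $d+1$ terms per variable and $n(d+1)$ total, but this is immaterial to the claim.)
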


Now, to argue that $\vd_{\homo}$ and $\vd_{\aff}$ are universal, we need the following:
\begin{lemma}[\cite{Fis94}]
\label{lem:lem1}
Over any infinite field containing the set of integers, there exists $2^d$ homogeneous linear forms $L_1,\ldots,L_{2^d}$ such that
\begin{center}
$\prod\limits_{i=1}^d x_i = \sum\limits_{i=1}^{2^d} L_i^d$
\end{center} 

\end{lemma}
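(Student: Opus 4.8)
The plan is to prove the identity $\prod_{i=1}^d x_i = \sum_{i=1}^{2^d} L_i^d$ by an inclusion–exclusion (polarization) argument over the hypercube of sign patterns. Concretely, for each sign vector $\varepsilon = (\varepsilon_1,\ldots,\varepsilon_d) \in \{+1,-1\}^d$ I would take the homogeneous linear form $L_\varepsilon = \varepsilon_1 x_1 + \varepsilon_2 x_2 + \cdots + \varepsilon_d x_d$, and consider the candidate combination $\sum_{\varepsilon} c_\varepsilon L_\varepsilon^d$ for suitable scalars $c_\varepsilon \in \{+1,-1\}$ (up to a normalizing constant). The natural choice is $c_\varepsilon = \varepsilon_1 \varepsilon_2 \cdots \varepsilon_d = \prod_j \varepsilon_j$, which is exactly the standard polarization identity for the multilinear monomial.

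The key computation is to expand $\sum_{\varepsilon \in \{\pm1\}^d} \left(\prod_{j=1}^d \varepsilon_j\right) (\varepsilon_1 x_1 + \cdots + \varepsilon_d x_d)^d$ using the multinomial theorem. A generic term coming from a multi-index $(a_1,\ldots,a_d)$ with $\sum a_j = d$ contributes $\binom{d}{a_1,\ldots,a_d} x_1^{a_1}\cdots x_d^{a_d} \sum_{\varepsilon} \left(\prod_j \varepsilon_j\right)\left(\prod_j \varepsilon_j^{a_j}\right) = \binom{d}{a_1,\ldots,a_d} x_1^{a_1}\cdots x_d^{a_d} \sum_{\varepsilon} \prod_j \varepsilon_j^{a_j+1}$. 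The inner sum $\sum_{\varepsilon \in \{\pm1\}^d} \prod_j \varepsilon_j^{a_j+1}$ factors as $\prod_j \left(\sum_{\varepsilon_j \in \{\pm1\}} \varepsilon_j^{a_j+1}\right)$, and each factor equals $2$ if $a_j+1$ is even (i.e., $a_j$ is odd) and $0$ if $a_j+1$ is odd (i.e., $a_j$ is even, including $a_j = 0$). Hence the whole sum survives only when every $a_j$ is odd; combined with $\sum a_j = d$ and $a_j \ge 1$, the only surviving multi-index is $a_1 = \cdots = a_d = 1$. That term gives $d! \cdot x_1 \cdots x_d \cdot 2^d$, so $\sum_{\varepsilon} \left(\prod_j \varepsilon_j\right) L_\varepsilon^d = 2^d\, d!\, x_1 \cdots x_d$.

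To finish, I would absorb the constant $2^d d!$ into the linear forms: since the field contains the integers and is infinite (hence in particular has characteristic $0$, so $2^d d! \ne 0$ and has a $d$-th root available after passing to a suitable extension — or more carefully, rescale by distributing $\lambda^d = 1/(2^d d!)$ across $L_\varepsilon$), replace each $L_\varepsilon$ by $\lambda L_\varepsilon$ where $\lambda$ is chosen so that $\lambda^d = \pm 1/(2^d d!)$, folding the sign $c_\varepsilon = \prod_j \varepsilon_j$ into the same rescaling when $d$ is odd, or noting $c_\varepsilon = (c_\varepsilon)^d$ when $d$ is odd so it can be pushed inside the $d$-th power directly. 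This yields exactly $2^d$ homogeneous linear forms whose $d$-th powers sum to $\prod_{i=1}^d x_i$.

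The main obstacle is the bookkeeping around the constant $2^d d!$ and the sign $c_\varepsilon$: one must be careful that rescaling a linear form by $\lambda$ multiplies its $d$-th power by $\lambda^d$, so only $d$-th powers of field elements can be absorbed this way. Over $\mathbb{R}$ or $\mathbb{C}$ (the setting of this paper, and more generally any infinite field containing $\mathbb{Z}$ if one allows the mild extension or simply works over $\mathbb{C}$) every nonzero element is a $d$-th power, so this is not a real difficulty; but the statement as quoted from \cite{Fis94} should be read with that understanding, and I would simply invoke it, or reproduce the short polarization argument above, noting that the hypothesis ``infinite field containing the integers'' is there precisely to guarantee $2^d d! \ne 0$ and enough roots of unity / $d$-th roots to carry out the normalization.
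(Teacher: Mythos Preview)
The paper does not give its own proof of this lemma; it is quoted from \cite{Fis94} and used as a black box. Your polarization argument is exactly the standard proof of Fischer's identity, and your computation
\[
\sum_{\varepsilon\in\{\pm1\}^d}\Bigl(\prod_{j}\varepsilon_j\Bigr)\Bigl(\sum_{j}\varepsilon_j x_j\Bigr)^d \;=\; 2^d\,d!\;x_1\cdots x_d
\]
is correct.

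One remark on the ``bookkeeping'' you flag at the end: your instinct that this is the delicate point is right, and in fact the statement \emph{as literally written} (a bare sum of $d$-th powers, with no outside coefficients) is not true over every infinite field containing $\mathbb{Z}$. For instance over $\mathbb{R}$ with $d=2$ one cannot have $x_1x_2=\sum_i L_i^2$, since the right-hand side is nonnegative everywhere. What Fischer actually proves is the signed identity above (equivalently, a sum $\sum_i \alpha_i L_i^d$ with $\alpha_i\in\mathbb{F}$), and that is all the present paper needs for the universality of $\Sigma\cdot\vd_{\homo}$ and $\Sigma\cdot\vd_{\aff}$. Over $\mathbb{C}$ (the main field of interest here) every nonzero element has a $d$-th root, so the signs and the constant $2^d d!$ can indeed be absorbed into the $L_i$ exactly as you describe, and the lemma holds as stated. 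So your proof is correct over $\mathbb{C}$; over general fields the right reading of the lemma is the signed version, and your argument proves that too.
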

Combining with Corollary~\ref{cor:pow-sym} with Lemma~\ref{lem:lem1} we establish the universality of the classes $\Sigma\cdot\vd_{\homo}$ and $\Sigma\cdot\vd_{\aff}$.

\begin{lemma}
The classes $\Sigma\cdot\vd_{\homo}$ and  $\Sigma\cdot\vd_{\aff}$ are universal.
\end{lemma}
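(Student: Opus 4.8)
The plan is to show that an arbitrary polynomial family can be written as a polynomial-size sum of projections of Vandermonde polynomials by reducing, in two stages, to the known universality of sums of $d$-th powers of linear forms. First I would recall that any polynomial $f \in \mathbb{F}[x_1,\dots,x_n]$ of degree $d$ computed by an arithmetic circuit of size $s$ can be written as a sum of $\poly(s,d)$ monomials (or, via a depth-reduction / homogenization step, as a sum of products of homogeneous linear forms of the form $\prod_{i=1}^{e} L_i$); the trivial bound suffices for the universality statement, which only asserts that \emph{every} polynomial lies in the class, with no size constraint beyond being a legitimate member of $\Sigma\cdot\mathcal C$. So it is enough to show that a single monomial $x_{i_1} x_{i_2} \cdots x_{i_d}$ (equivalently, by renaming and allowing repeats, $\prod_{j=1}^d y_j$ on fresh variables) belongs to $\Sigma\cdot\vd_{\homo}$, and that $\vd_{\aff}$ can handle arbitrary affine shifts as well.

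The key chain of reductions is: Lemma~\ref{lem:lem1} expresses $\prod_{j=1}^d y_j = \sum_{i=1}^{2^d} L_i^d$ with each $L_i$ a homogeneous linear form in the $y_j$'s; Corollary~\ref{cor:pow-sym} shows that every power-sum $\mathsf{Pow}_{n,d} = x_1^d + \cdots + x_n^d$, and in particular (taking $n=1$) the single power $L^d$ for any linear form $L$, is a sum of $O(d)$ projections lying in $\vd_{\proj} \subseteq \vd_{\homo}$ — here I would note that $L^d = z^d$ after the substitution $z \mapsto L$, which is itself a homogeneous (resp.\ affine) linear substitution, so $L^d \in \Sigma\cdot\vd_{\homo}$ when $L$ is homogeneous and $\in \Sigma\cdot\vd_{\aff}$ in general; finally, substituting these into the $2^d$ summands of Lemma~\ref{lem:lem1} and using closure of $\Sigma\cdot(\cdot)$ under addition yields $\prod_{j=1}^d y_j \in \Sigma\cdot\vd_{\homo}$. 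Summing over the monomials of $f$ gives $f \in \Sigma\cdot\vd_{\homo}$, and since $\vd_{\homo} \subseteq \vd_{\aff}$, also $f \in \Sigma\cdot\vd_{\aff}$. For the affine class one additionally gets, for free, that non-homogeneous $f$ are covered, because $\vd_{\aff}$ already contains the non-homogeneous projections needed and the power-sum construction of Corollary~\ref{cor:pow-sym} goes through verbatim with affine linear forms.

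The one technical point I would be careful about is the substitution step turning a $d$-th power of a linear form into a sum of projections of Vandermonde determinants of the \emph{right size}: Corollary~\ref{cor:pow-sym} is stated for $\mathsf{Pow}_{n,d}$, and I want to invoke it with a single variable replaced by a linear form $L_i(y_1,\dots,y_n)$. Since $\vd_{\homo}$ is by definition closed under composition with homogeneous linear forms (a projection of a projection is a projection), replacing the occurrences of the single variable $x$ in the matrices $M_i$ of Lemma~\ref{lem:univariate} by $L_i$ keeps each $g_j$ inside $\vd_{\homo}$; likewise with affine forms for $\vd_{\aff}$. The only mild obstacle — really a bookkeeping obligation rather than a genuine difficulty — is to confirm that the $\beta_\ell$-entries in those Vandermonde matrices are field constants, so that after the substitution the matrix is still of Vandermonde shape with one symbolic column $(1, L_i, L_i^2, \dots)^T$ and the rest constant columns, hence its determinant is genuinely a projection $\vd(\rho_1,\dots,\rho_m)$ with $\rho_1 = L_i$ and $\rho_\ell \in \mathbb{F}$ for $\ell \ge 2$; this is exactly the situation in Lemma~\ref{lem:univariate}, so no new argument is needed. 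I would therefore write the proof as: ``Fix a monomial; apply Lemma~\ref{lem:lem1}; apply Corollary~\ref{cor:pow-sym} (composed with the linear substitution $x \mapsto L_i$) to each summand; use closure under addition; sum over monomials.'' Since the paper only claims universality (membership), no lower-bound or counting argument is required, and the whole proof is a few lines assembled from the already-established lemmas.

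\begin{proof}
Let $f \in \mathbb{F}[x_1,\dots,x_n]$ be arbitrary. Writing $f$ as a sum of its monomials, and since $\Sigma\cdot\vd_{\homo}$ and $\Sigma\cdot\vd_{\aff}$ are closed under addition, it suffices to show that a single monomial lies in these classes. A monomial of degree $d$ has the form $\prod_{j=1}^{d} y_j$ for suitable (not necessarily distinct) variables $y_1,\dots,y_d$; by Lemma~\ref{lem:lem1} there are homogeneous linear forms $L_1,\dots,L_{2^d}$ with $\prod_{j=1}^{d} y_j = \sum_{i=1}^{2^d} L_i^d$. It remains to show $L_i^d \in \Sigma\cdot\vd_{\homo}$ for each $i$. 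By Lemma~\ref{lem:univariate} (equivalently, Corollary~\ref{cor:pow-sym} with $n=1$), the univariate polynomial $z^d$ equals $\sum_{k} \alpha_k g_k(z)$ where each $g_k$ is the determinant of a Vandermonde matrix with a single symbolic column $(1,z,z^2,\dots)^{T}$ and all remaining columns consisting of field constants; thus $g_k \in \vd_{\proj} \subseteq \vd_{\homo}$. Substituting $z \mapsto L_i$ — a homogeneous linear substitution — each $g_k(L_i)$ is again a projection of a Vandermonde polynomial via homogeneous linear forms, so $g_k(L_i) \in \vd_{\homo}$ and hence $L_i^d = \sum_k \alpha_k\, g_k(L_i) \in \Sigma\cdot\vd_{\homo}$. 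Summing over $i$ and then over all monomials of $f$ yields $f \in \Sigma\cdot\vd_{\homo}$. Since $\vd_{\homo} \subseteq \vd_{\aff}$, we also get $f \in \Sigma\cdot\vd_{\aff}$; moreover the same argument, carried out with affine linear substitutions, directly handles non-homogeneous $f$ for the class $\Sigma\cdot\vd_{\aff}$. Therefore both $\Sigma\cdot\vd_{\homo}$ and $\Sigma\cdot\vd_{\aff}$ are universal.
\end{proof}
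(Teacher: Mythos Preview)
Your overall strategy---Fischer's identity (Lemma~\ref{lem:lem1}) followed by the univariate power representation (Lemma~\ref{lem:univariate}/Corollary~\ref{cor:pow-sym})---is exactly the route the paper indicates. For the class $\Sigma\cdot\vd_{\aff}$ your argument is fine: the building blocks $g_k(z)=\vd(z,\beta_1,\dots,\beta_{m-1})$ from Lemma~\ref{lem:univariate} become $\vd(L,\beta_1,\dots,\beta_{m-1})$ after $z\mapsto L$, and since constants are affine forms of degree~$\le 1$ this lies in $\vd_{\aff}$.

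There is, however, a genuine gap in the $\vd_{\homo}$ half. You write ``$g_k\in\vd_{\proj}\subseteq\vd_{\homo}$'' and then ``each $g_k(L_i)$ is again a projection of a Vandermonde polynomial via homogeneous linear forms''. The inclusion $\vd_{\proj}\subseteq\vd_{\homo}$ is false---the paper itself proves $\vd_{\proj}\not\subset\vd_{\homo}$ in Lemma~\ref{lem:sep}(2)---and your specific $g_k$ are precisely the counterexamples: in $\vd(L,\beta_1,\dots,\beta_{m-1})$ the entries $\beta_j\in\mathbb{F}$ are constants, not homogeneous linear forms with $\deg=1$ and value $0$ at the origin, so this polynomial is in $\vd_{\aff}\setminus\vd_{\homo}$. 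Thus the substitution step does not land you in $\vd_{\homo}$, and the chain breaks.

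Worse, this is not a technicality you can patch by choosing different $g_k$: every element of $\vd_{\homo}$ is homogeneous of degree $\binom{m}{2}$ for some $m$, i.e.\ of \emph{triangular} degree. Consequently any $\mathbb{F}$-linear combination of elements of $\vd_{\homo}$ has zero degree-$2$ homogeneous part, so for instance $x_1x_2\notin\Sigma\cdot\vd_{\homo}$. Your argument therefore cannot be repaired to yield universality of $\Sigma\cdot\vd_{\homo}$ as stated; the $\vd_{\aff}$ conclusion stands, but the $\vd_{\homo}$ conclusion needs either a different notion of ``universal'' or a different definition of $\vd_{\homo}$.
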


Also, in the following, we note that  $\vd_{\aff}$ is more powerful than depth three $\Sigma\wedge\Sigma$ circuits: 

\begin{lemma}${\sf \poly-size~}\Sigma\wedge\Sigma\subsetneq {\sf \poly{-}size~}\Sigma\cdot \vd_{\aff}$.
\end{lemma}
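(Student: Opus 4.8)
The plan is to prove both a containment and a separation. For the containment $\Sigma\wedge\Sigma \subseteq \Sigma\cdot\vd_{\aff}$, recall that a $\Sigma\wedge\Sigma$ circuit computes a sum of powers of linear forms, $f = \sum_{i=1}^s \beta_i \ell_i^{d_i}$, where each $\ell_i$ is an affine linear form. So it suffices to show that a single power $\ell^{d}$ of an affine linear form lies in $\Sigma\cdot\vd_{\aff}$ with polynomially many summands. First I would observe that $\ell^d$ is (up to the affine substitution sending $x_1 \mapsto \ell$) the univariate monomial $x_1^d$; since $\vd_{\aff}$ and $\Sigma\cdot\vd_{\aff}$ are closed under affine substitutions of the variables, it is enough to handle $x_1^d$ itself. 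Then I would invoke the construction in the proof of Lemma~\ref{lem:univariate}: that proof exhibits $x^d = \sum_{k=0}^d \alpha_k g_k$ with each $g_k \in \vd_{\proj} \subseteq \vd_{\aff}$, using $d+1$ summands. Summing over the $s$ terms of the $\Sigma\wedge\Sigma$ circuit gives $f$ as a $\Sigma\cdot\vd_{\aff}$ expression of size $O(sd)$, which is polynomial. Hence $\poly\text{-size }\Sigma\wedge\Sigma \subseteq \poly\text{-size }\Sigma\cdot\vd_{\aff}$.

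For the strict separation I would exhibit a polynomial family in $\Sigma\cdot\vd_{\aff}$ that provably requires super-polynomial size $\Sigma\wedge\Sigma$ circuits. A single Vandermonde projection $\vd(\ell_1,\ldots,\ell_n) = \prod_{i<j}(\ell_i - \ell_j)$ is already a natural candidate: with $\ell_i = x_i$ it is exactly $\vd_n = \prod_{i<j}(x_i - x_j)$, which lies in $\vd \subseteq \vd_{\aff} \subseteq \Sigma\cdot\vd_{\aff}$ trivially (one summand). So the claim reduces to showing $\vd_n$ has no polynomial-size $\Sigma\wedge\Sigma$ circuit. Here I would use the standard partial-derivatives (or shifted-partial-derivatives / evaluation-dimension) lower bound method that is known to work against $\Sigma\wedge\Sigma$: the dimension of the span of order-$k$ partial derivatives of a sum of $s$ powers of linear forms is at most $s\cdot\binom{k+?}{?}$-type small, whereas $\partial^{=k}(\vd_n)$ is large. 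Concretely, for $\Sigma\wedge\Sigma$ one has $\dim \partial^{=k}(\sum_i \ell_i^{d_i}) \le s$ for every $k$ (each $k$-th derivative of $\ell_i^{d_i}$ is a scalar multiple of $\ell_i^{d_i-k}$), so it suffices to show $\dim \partial^{=k}(\vd_n) > n^{\omega(1)}$ for some $k$; since $\vd_n$ is multilinear-free but still has many linearly independent partials (e.g. its first-order partials $\partial \vd_n/\partial x_i$ are already $n$ linearly independent polynomials, and higher orders grow), a suitable choice of $k$ forces $s$ to be super-polynomial.

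The main obstacle I anticipate is the lower-bound half, not the containment. The containment is essentially bookkeeping on top of Lemma~\ref{lem:univariate}. For the separation I must pin down the precise measure that simultaneously (a) is small for $\Sigma\wedge\Sigma$ and (b) is demonstrably large for $\vd_n$. Option one: the \emph{dimension of first-order (or low-order) partial derivatives} — easy to bound for $\Sigma\wedge\Sigma$ but I need a clean lower bound $\dim\partial^{=k}(\vd_n)$, which requires a small combinatorial argument about linear independence of the polynomials $\partial^{k}\vd_n/\partial x_{i_1}\cdots\partial x_{i_k}$ (these are, up to sign, Schur-like or smaller Vandermonde-type expressions, and one shows they are independent by looking at leading monomials in a fixed monomial order). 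Option two: \emph{evaluation dimension} at substitutions $x_i \leftarrow$ constants, which also upper-bounds nicely for sums of powers of linear forms. I would go with whichever gives the cleanest write-up; the partial-derivative route seems most direct given that $\partial^{=k}$ is already defined in the Preliminaries.

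\begin{proof}
We first show $\poly\text{-size }\Sigma\wedge\Sigma \subseteq \poly\text{-size }\Sigma\cdot\vd_{\aff}$. A polynomial-size $\Sigma\wedge\Sigma$ circuit computes $f = \sum_{i=1}^{s}\beta_i \ell_i^{d_i}$ for affine linear forms $\ell_i$, with $s$ and the $d_i$ polynomially bounded. It suffices to write each $\ell_i^{d_i}$ as a short $\Sigma\cdot\vd_{\aff}$ expression. Since $\vd_{\aff}$ is closed under replacing the variables $x_1,\ldots,x_n$ by affine linear forms, and $\ell_i^{d_i}$ is the image of the univariate monomial $x_1^{d_i}$ under such a substitution, it is enough to express $x^{d}$. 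By the construction in the proof of Lemma~\ref{lem:univariate} we have $x^{d} = \sum_{k=0}^{d}\alpha_k g_k$ with $g_k\in\vd_{\proj}\subseteq\vd_{\aff}$ and $\alpha_k\in\mathbb{F}$, using $d+1$ summands. Summing over $i$ yields $f$ as a sum of $O\!\left(\sum_i (d_i+1)\right) = n^{O(1)}$ projections from $\vd_{\aff}$. Hence $f\in\poly\text{-size }\Sigma\cdot\vd_{\aff}$.

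It remains to show the inclusion is strict, for which we exhibit $\vd_n \in \poly\text{-size }\Sigma\cdot\vd_{\aff}$ with no polynomial-size $\Sigma\wedge\Sigma$ circuit. The membership is immediate: $\vd_n = \vd(x_1,\ldots,x_n)\in\vd\subseteq\vd_{\aff}$ is a single summand. For the lower bound, suppose $\vd_n = \sum_{i=1}^{s}\beta_i \ell_i^{d_i}$ with $\ell_i$ affine linear, $\deg(\ell_i^{d_i}) = \binom{n}{2}$ for the degree-$\binom n2$ homogeneous part (lower-degree parts cancel). Fix $k$ with $1\le k\le \binom n2$. Each order-$k$ partial derivative of $\ell_i^{d_i}$ is a scalar multiple of $\ell_i^{d_i-k}$, so $\partial^{=k}\!\left(\sum_i\beta_i\ell_i^{d_i}\right)\subseteq \mathbb{F}\text{-}{\sf Span}\{\ell_1^{d_1-k},\ldots,\ell_s^{d_s-k}\}$ and thus $\dim\partial^{=k}(\vd_n)\le s$. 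We now lower-bound $\dim\partial^{=k}(\vd_n)$. Since $\vd_n = \prod_{i<j}(x_i-x_j)$ is alternating, its order-$k$ partials indexed by $k$-subsets $T\subseteq[n]$ of \emph{distinct} variables are, up to sign, the polynomials obtained by the corresponding Schur-type straightening; choosing a fixed monomial order one checks that for distinct $k$-subsets $T$ the leading monomials of $\partial^k\vd_n/\prod_{t\in T}\partial x_t$ are distinct, so these $\binom{n}{k}$ polynomials are linearly independent. Therefore $\dim\partial^{=k}(\vd_n)\ge\binom nk$, giving $s\ge\binom nk$. Taking $k=\lfloor n/2\rfloor$ forces $s = 2^{\Omega(n)}$, so $\vd_n\notin\poly\text{-size }\Sigma\wedge\Sigma$. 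Combining with the containment above completes the proof.
\end{proof}
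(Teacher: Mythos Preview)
Your approach is essentially the paper's: use $\dim\partial^{=k}$ as the complexity measure, bound it above by $s$ for any $\Sigma\wedge\Sigma$ representation, and take $\vd_n$ (trivially in $\Sigma\cdot\vd_{\aff}$) as the separating polynomial. You go further than the paper by also spelling out the containment via Lemma~\ref{lem:univariate}, which the paper leaves implicit.

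One concrete step in your write-up fails as stated, though it is the same claim the paper asserts without proof: the bound $\dim\partial^{=k}(\vd_n)\ge\binom nk$ is not true in general, and your leading-monomial justification breaks. Because $\vd_n$ is translation-invariant, $\sum_i\partial_{x_i}\vd_n=0$, so already for $k=1$ the $n$ first-order partials are dependent; concretely, in lex order with $x_1>\cdots>x_n$, the leading monomial of $\vd_n$ is $x_1^{n-1}x_2^{n-2}\cdots x_{n-1}$, and both $\partial_{x_{n-1}}\vd_n$ and $\partial_{x_n}\vd_n$ have leading monomial $x_1^{n-1}\cdots x_{n-2}^{2}$. The fix is immediate and does not affect the conclusion: restrict to $k$-subsets $T\subseteq\{1,\ldots,n-1\}$, for which differentiating the leading monomial of $\vd_n$ does give pairwise distinct leading monomials, yielding $\dim\partial^{=k}(\vd_n)\ge\binom{n-1}{k}$ and hence still $s=2^{\Omega(n)}$ at $k=\lfloor (n-1)/2\rfloor$.
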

\begin{proof}
Let $f\in\Sigma\wedge\Sigma$. Then $f=\sum\limits_{i=1}^{s}\ell_i^d$. Then for any $k\ge 1$, $\dim \partial^{=k} (f) \le s$.  Now, for any $ 1\le k\le n/2$  we have  $\dim \partial^{=k} (\vd) \ge {n \choose k}$. Therefore,  if $f= \vd$ we have $s = 2^{\Omega(n)}$  by setting $k = n/2$. Hence ${\sf \poly-size~}\Sigma\wedge\Sigma\subsetneq {\sf \poly{-}size~}\Sigma\cdot \vd_{\aff}$.   
\end{proof}

%

\subsection*{A Lower Bound against $\Sigma\cdot \vd_{\proj}$}
%
%
%

 Observe that $\Sigma\cdot\vd_{\proj}$ is a subclass of non-homogeneous depth circuits of bottom fan-in $2$, i.e., $\Sigma\Pi\Sigma^{[2]}$. It is known that $Sym_{2n,n}$ can be computed by non-homogeneous $\Sigma\Pi\Sigma^{[2]}$ circuits of size $O(n^2)$. We show that any $\Sigma\cdot \vd_{\proj}$ computing $Sym_{n,n/2}$ requires  a top fan-in of $2^{\Omega(n)}$ and hence $\Sigma\cdot\vd_{\proj}\subsetneq \Sigma\Pi\Sigma^{[2]}$.   The lower bound is obtained by   a variant of the evaluation dimension as a complexity measure for polynomials. 

\begin{defn}{\em (Restricted Evaluation Dimension.)}
Let $f\in\mathbb{F}[x_1,\ldots,x_n]$ and $S=\{i_1,\ldots,i_k\}\subseteq [n]$. Let $\bar{a}=(a_{i_1},a_{i_2},\ldots,a_{i_k})\in\{0,1,*\}^k$
and $f\vert_{S=\bar{a}}$ be the polynomial obtained by substituting for all $i_j\in S$,
\begin{center}
$x_{i_j} = \begin{cases} 1 &\mbox{if } a_{i_j}=1 \\
0 &\mbox{if } a_{i_j}=0 \\
x_{i_j} &\mbox{if } a_{i_j}=* \\
\end{cases} $
\end{center}
Let  $f\vert_{S} \stackrel{\sf def}{=} \{ f\vert_{S=\bar{a}} \mid \bar{a}\in\{0,1,*\}^k \}$. The {\em restricted evaluation dimension of $f$}  is defined as: 
\begin{center}
$\ed(f) \stackrel{\sf def}{=} dim(\mathbb{F}\mbox{-}span(f\vert_{S}))$ 
\end{center}
\end{defn}
It is not hard to see that the measure $\ed$ is sub-additive: 
\begin{lemma}
\label{lem:red-subadd}
 For any $f,g\in\mathbb{F}[x_1,x_2,\ldots,x_n]$,
$\ed(f+g)\leq \ed(f) + \ed(g)$.
\end{lemma}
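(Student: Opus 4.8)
The statement to prove is that the restricted evaluation dimension $\ed$ is sub-additive: $\ed(f+g) \le \ed(f) + \ed(g)$. The plan is to exploit linearity of the substitution operation. Fix the set $S = \{i_1,\ldots,i_k\}$ underlying the definition of $\ed$. The key observation is that for every fixed assignment $\bar a \in \{0,1,*\}^k$, the restriction map $h \mapsto h\vert_{S=\bar a}$ is an $\mathbb{F}$-linear map on $\mathbb{F}[x_1,\ldots,x_n]$; in particular $(f+g)\vert_{S=\bar a} = f\vert_{S=\bar a} + g\vert_{S=\bar a}$.

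First I would record this linearity and deduce the set inclusion
\[
(f+g)\vert_S \subseteq \{\, p + q \mid p \in f\vert_S,\ q \in g\vert_S \,\},
\]
which holds because each element of $(f+g)\vert_S$ is of the form $f\vert_{S=\bar a} + g\vert_{S=\bar a}$ for a common $\bar a$. Taking $\mathbb{F}$-linear spans, this gives
\[
\mathbb{F}\text{-span}\big((f+g)\vert_S\big) \subseteq \mathbb{F}\text{-span}\big(f\vert_S\big) + \mathbb{F}\text{-span}\big(g\vert_S\big),
\]
where the right-hand side is the usual sum of two subspaces of $\mathbb{F}[x_1,\ldots,x_n]$ (viewed as an $\mathbb{F}$-vector space). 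Then I would invoke the standard fact that $\dim(U + W) \le \dim U + \dim W$ for subspaces $U, W$, and conclude
\[
\ed(f+g) = \dim\,\mathbb{F}\text{-span}\big((f+g)\vert_S\big) \le \dim\,\mathbb{F}\text{-span}\big(f\vert_S\big) + \dim\,\mathbb{F}\text{-span}\big(g\vert_S\big) = \ed(f) + \ed(g).
\]

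There is essentially no obstacle here; the only point requiring a word of care is that the set $S$ (and hence the range $\{0,1,*\}^k$ of assignments) must be the \emph{same} on both sides of the inequality, so that the same $\bar a$ can be applied simultaneously to $f$ and $g$ — this is exactly what makes the inclusion of sets above valid. Once that is noted, the proof is a two-line consequence of linearity of substitution and sub-additivity of subspace dimension.
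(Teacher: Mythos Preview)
Your proposal is correct and is exactly the argument the paper has in mind: the paper does not spell out a proof at all, merely prefacing the lemma with ``It is not hard to see that the measure $\ed$ is sub-additive,'' and your two-line argument via linearity of substitution and $\dim(U+W)\le\dim U+\dim W$ is the intended justification.
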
  
In the following, we show that Vandermonde polynomials and their projections have low restricted evaluation dimension:

\begin{lemma}  
\label{lem:vd-ub}
Let $M$ be a $m\times m$ Vandermonde matrix with entries from $\{x_1,\dots, x_n\}\cup \mathbb{F}$ and $f=\det(M)$. 
Then for any $S\subset \{1\ldots, n\}$ with $|S|=k$, we have  $\ed(f)\leq (k+1)^2$.
\end{lemma}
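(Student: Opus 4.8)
The plan is to analyze how the determinant of a Vandermonde matrix behaves under the $\{0,1,*\}$-substitutions on a set $S$ of $k$ variables, and to bound the dimension of the resulting family of polynomials. First I would fix $S = \{i_1,\dots,i_k\}$ and consider an arbitrary restriction $\bar a \in \{0,1,*\}^k$. Since $M$ is a Vandermonde matrix, its columns are of the form $(1,t,t^2,\dots,t^{m-1})^T$ where each $t$ is either a variable from $\{x_1,\dots,x_n\}$ or a field constant; write $\vd(t_1,\dots,t_m) = \prod_{a<b}(t_b - t_a)$ for the determinant. A restriction $\bar a$ affects only those columns whose parameter $t_\ell$ lies in $S$: the $*$-entries leave that column's variable untouched, the $0$-entries turn $t_\ell$ into the constant $0$, and the $1$-entries turn $t_\ell$ into the constant $1$. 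Crucially, at most one column can be set to $0$ and at most one to $1$ without the determinant vanishing identically (two columns with the same constant parameter make two equal columns), so the only restrictions that matter are those assigning $0$ to at most one $S$-column and $1$ to at most one $S$-column, leaving the remaining $S$-columns as their original variables.

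Next I would count the essentially distinct nonzero restrictions. Group the columns of $M$ by which of the $k$ indices of $S$ (if any) is their parameter. A surviving restriction is determined by choosing at most one $S$-index to receive value $0$ and at most one (other) $S$-index to receive value $1$; the remaining columns are unchanged. This gives at most $(k+1)$ choices for the "$0$-slot" (including the choice of not using it) times $(k+1)$ choices for the "$1$-slot", hence at most $(k+1)^2$ distinct nonzero polynomials in $f\vert_S$ up to sign — and the zero polynomial contributes nothing to the span. Therefore $\ed(f) = \dim(\mathbb{F}\text{-span}(f\vert_S)) \le (k+1)^2$.

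The main step requiring care is justifying that each nonzero restriction really is obtained by deleting at most two columns/rows of the Vandermonde structure and substituting, rather than producing some genuinely new polynomial; this follows because setting a Vandermonde column's parameter to a constant $c$ makes $\vd(\dots,c,\dots) = \vd(\text{others}) \cdot \prod_{j}(t_j - c)$, i.e. the restriction factors through the Vandermonde determinant of the remaining parameters times a product of linear forms determined entirely by which slot got $c$. So the only data distinguishing two nonzero restrictions in $f\vert_S$ are the identity of the (at most one) $0$-column and the (at most one) $1$-column, giving the $(k+1)^2$ bound. I expect the main obstacle to be handling the bookkeeping when several columns of $M$ share the same variable parameter (so a single index of $S$ labels multiple columns): in that case assigning it a constant kills the determinant outright, which only further restricts the count, so the bound $(k+1)^2$ still holds and is in fact loose.
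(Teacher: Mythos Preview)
Your argument is correct and follows essentially the same route as the paper: both observe that a restriction sending two $S$-indices to $0$ (or two to $1$) makes two Vandermonde columns equal and hence kills the determinant, so a nonzero restriction is determined by at most one ``$0$-slot'' and at most one ``$1$-slot''. The paper organizes the count case-by-case as $1 + 2k + k^2$ after first passing to $T = S \cap \var(f)$, whereas you multiply the $(k+1)$ choices for each slot directly; both arrive at the bound $(k+1)^2$.
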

\begin{proof}
Without loss of generality suppose  $S = \{j_1,j_2,\ldots,j_k\}\subseteq [n]$ and $|S|=k$. Let $T = \{x_{j_1},x_{j_2},\ldots,x_{j_k}\}\cap \var(f) = \{i_1,i_2,\ldots,i_m\}$. Observe that $m\leq k$. 
For a vector $v\in \{0,1,*\}^n$ and $b\in \{0,1\}$, let $\#_b(v)$ denote the number of occurrences  of $b$ in the vector $v$. Then, for any $\bar{a}=(a_{j_1},a_{j_2},\ldots,a_{j_k})\in\{0,1,*\}^k$,
\begin{itemize}
\item If $\#_{0}((a_{i_1},a_{i_2},\ldots,a_{i_m}))\geq 2$ or $\#_{1}((a_{i_1},a_{i_2},\ldots,a_{i_m}))\geq 2$ then $f\vert_{S=\bar{a}}=0$. 
\item If $\#_{0}((a_{i_1},a_{i_2},\ldots,a_{i_m})) = \#_{1}((a_{i_1},a_{i_2},\ldots,a_{i_m})) = 1$.   Let $T_1$ be the set of polynomials obtained from such evaluations of $f^d$.  The number of such assignments is at most $\binom{m}{2} \leq \binom{k}{2}\leq k^2$ and hence  $|T_1| \le k^2$.
\item If  $\#_{0}(\{a_{i_1},a_{i_2},\ldots,a_{i_m}\})=1$ or $\#_{1}(\{a_{i_1},a_{i_2},\ldots,a_{i_m}\})=1$.  Let $T_2$ denote the set of polynomials obtained from such evaluations. Since number of such assignments is    $2\binom{m}{m-1} \leq 2\binom{m}{1} \leq 2\binom{k}{1} \leq 2k $, we have $|T_2| \le 2k$. 
\item If, $\#_{0}(\{a_{i_1},a_{i_2},\ldots,a_{i_m}\}) =  \#_{1}(\{a_{i_1},a_{i_2},\ldots,a_{i_m}\})= 0$, in this case, the polynomial $f$ does not change under these evaluations.
\end{itemize}
From the above case analysis, we have $\mathbb{F}\mbox{-}span(f|S=a) =\mathbb{F}\mbox{-}span(T_1\cup T_2\cup \{f\})$. Therefore $\ed(f) \leq k^2 + 2k +1 \leq (k+1)^2$.
\end{proof}

\begin{lemma}
\label{lem:sym-lb}
Let $Sym_{n,k}$ be the elementary symmetric polynomial in $n$ variables of degree $k$. Then for any $S\subset \{1,\ldots, n\}$, $|S|=k$, we have $\ed(Sym_{n,k})\geq 2^k-1$.
\end{lemma}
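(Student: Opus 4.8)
The plan is to exhibit $2^{k}$ explicit restrictions of $Sym_{n,k}$ whose images are visibly linearly independent because their coefficient vectors on a suitable family of monomials form the identity matrix. Without loss of generality take $S=\{1,\dots,k\}$ and write $Y=\{k+1,\dots,n\}$. For each subset $A\subseteq S$ let $\rho_{A}\in\{0,1,*\}^{k}$ be the restriction that keeps $x_{i}$ (value $*$) for $i\in A$ and sets $x_{i}=1$ for $i\in S\setminus A$. Each $\rho_{A}$ is an admissible restriction in the definition of $\ed$, and $\rho_{S}$ is the all-$*$ restriction, so all $2^{k}$ polynomials $Sym_{n,k}|_{\rho_{A}}$ lie in $\mathbb{F}$-span of $Sym_{n,k}|_{S}$.

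The first step is a short generating-function computation. Using $\prod_{i=1}^{n}(1+zx_{i})=\sum_{j\ge 0}Sym_{n,j}\,z^{j}$ and specialising the variables according to $\rho_{A}$ (and noting $|S\setminus A|=k-|A|$),
\[
Sym_{n,k}\big|_{\rho_{A}}=[z^{k}]\Big((1+z)^{\,k-|A|}\prod_{i\in A}(1+zx_{i})\prod_{i\in Y}(1+zx_{i})\Big).
\]
The second step is to read off, for every $B\subseteq S$, the coefficient of the squarefree monomial $u_{B}:=\prod_{i\in B}x_{i}$ in $Sym_{n,k}|_{\rho_{A}}$. Since $u_{B}$ contains no $Y$-variable, the factor $\prod_{i\in Y}(1+zx_{i})$ must contribute its constant term $1$; to produce the monomial $\prod_{i\in B}x_{i}$ the factor $\prod_{i\in A}(1+zx_{i})$ must contribute $z^{|B|}\prod_{i\in B}x_{i}$, which is possible (and then unique) only if $B\subseteq A$; and $(1+z)^{\,k-|A|}$ must then supply $z^{\,k-|B|}$. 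Hence the coefficient of $u_{B}$ in $Sym_{n,k}|_{\rho_{A}}$ equals $\binom{k-|A|}{k-|B|}$ if $B\subseteq A$ and $0$ otherwise. Because $B\subsetneq A$ forces $k-|B|>k-|A|\ge 0$, the binomial coefficient vanishes off the diagonal, so this coefficient is $1$ when $A=B$ and $0$ whenever $A\neq B$.

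The final step is then immediate: if $\sum_{A\subseteq S}c_{A}\,Sym_{n,k}|_{\rho_{A}}=0$, extracting the coefficient of $u_{B}$ yields $c_{B}=0$ for every $B\subseteq S$; thus the $2^{k}$ polynomials $\{Sym_{n,k}|_{\rho_{A}}\}_{A\subseteq S}$ are $\mathbb{F}$-linearly independent and $\ed(Sym_{n,k})\ge 2^{k}\ge 2^{k}-1$. The only point that needs care is the coefficient extraction in the second step: one must verify that neither a $Y$-variable nor an alternative choice of terms in the product can produce the monomial $u_{B}$, and that the off-diagonal binomial coefficients genuinely vanish — this is exactly the inequality $k-|B|>k-|A|$ for $B\subsetneq A$, which is what turns the whole coefficient matrix into the identity. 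Note that no case analysis on $n$ versus $k$ is needed; in particular the argument directly covers the application with $k=n/2$.
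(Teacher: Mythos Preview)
Your proof is correct and uses the same family of restrictions as the paper (set each $x_i$ with $i\in S$ either to $1$ or leave it free, giving one restriction per subset of $S$). The only difference is in how linear independence is established: the paper argues that the resulting polynomials have pairwise distinct leading monomials, whereas you explicitly compute that the coefficient of $\prod_{i\in B}x_i$ in $Sym_{n,k}|_{\rho_A}$ equals $\binom{k-|A|}{k-|B|}\cdot[B\subseteq A]=\delta_{A,B}$, so the coefficient matrix is the identity. Your computation is a bit more explicit and, because you include $A=\emptyset$, actually yields the slightly stronger bound $\ed(Sym_{n,k})\ge 2^{k}$.
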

\begin{proof}
Let $Sym_{n,k}$ be the elementary symmetric polynomial in $n$ variables of degree $k$.
For $T\subseteq S, T\neq \emptyset$,  define $\bar{a}_T = (a_1,\ldots,a_k)\in \{1,*\}^k$  as: 
\begin{center}
$a_i = \begin{cases} * &\mbox{if } x_i\in T \\
1 &\mbox{if } x_i \in S\setminus  T \end{cases}$
\end{center}
 Note that it is enough to prove:
\begin{equation}
\label{eq:dim}
\hspace*{30 mm}dim(\{Sym_{n,k}\vert_{S=\bar{a}_T} \mid T\subseteq S, T\neq \emptyset\})\geq 2^k-1
\end{equation}
Since $\{Sym_{n,k}\vert_{S=\bar{a}_T} \mid T\subseteq S, T\neq \emptyset\} \subseteq \{\mathbb{F}\mbox{-}span(Sym_{n,k}\vert_{S=\bar{a}}) \mid \bar{a} = (a_1,\ldots,a_k)\in \{1,*\}^k \}$, by Equation (\ref{eq:dim}) we have
$$\ed(Sym_{n,k}) \geq dim(\{Sym_{n,k}\vert_{S=\bar{a}_T} \mid T\subseteq S, T\neq \emptyset\}) =2^k-1.$$
 To prove~(\ref{eq:dim}), note that for any distinct $T_1,T_2\subseteq S$, we have $Sym_{n,k}\vert_{S=\bar{a}_{T_1}}$ and $Sym_{n,k}\vert_{S=\bar{a}_{T_2}}$ have distinct leading monomials with respect to the {\sf lex} ordering since they have distinct supports.  Since the number of distinct leading monomials in a space of  polynomials is a lower bound on its dimension, this concludes the proof of (\ref{eq:dim}).
\end{proof}

\begin{theorem}
\label{thm:sym-vd-lb}
If $\sum\limits_{i=1}^s\alpha_if_i=Sym_{n,n/2}$ where $f_i\in\vd_{\proj}$ then $s=2^{\Omega(n)}$.
\end{theorem}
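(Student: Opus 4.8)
The plan is to combine the sub-additivity of the restricted evaluation dimension (Lemma~\ref{lem:red-subadd}), the upper bound on $\ed$ for projections of Vandermonde determinants (Lemma~\ref{lem:vd-ub}), and the lower bound on $\ed$ for the elementary symmetric polynomial (Lemma~\ref{lem:sym-lb}). First I would fix the set $S \subseteq [n]$ with $|S| = k = n/2$ that is used implicitly in the definition of $\ed$; all three lemmas are stated relative to such an $S$, so we choose one and keep it throughout. The measure $\ed$ is defined with respect to this $S$, and Lemma~\ref{lem:vd-ub} bounds $\ed(f_i) \le (k+1)^2 = (n/2+1)^2$ for each projection $f_i \in \vd_{\proj}$ (note each $f_i$ is the determinant of a Vandermonde matrix with entries in $\{x_1,\dots,x_n\} \cup \mathbb{F}$, so the lemma applies directly).

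The main computation is then a two-line estimate. Scaling by a constant $\alpha_i$ does not change the span, so $\ed(\alpha_i f_i) = \ed(f_i) \le (n/2+1)^2$. Applying Lemma~\ref{lem:red-subadd} inductively to the sum $\sum_{i=1}^s \alpha_i f_i$ gives
\begin{equation}
\label{eq:lb-chain}
\ed(Sym_{n,n/2}) = \ed\left(\sum_{i=1}^s \alpha_i f_i\right) \le \sum_{i=1}^s \ed(\alpha_i f_i) \le s \cdot (n/2+1)^2.
\end{equation}
On the other hand, Lemma~\ref{lem:sym-lb} with $k = n/2$ gives $\ed(Sym_{n,n/2}) \ge 2^{n/2} - 1$. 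Combining this with \eqref{eq:lb-chain} yields $s \ge (2^{n/2}-1)/(n/2+1)^2 = 2^{\Omega(n)}$, which is the claim. I would also remark that this simultaneously shows $\Sigma \cdot \vd_{\proj} \subsetneq \Sigma\Pi\Sigma^{[2]}$, since $Sym_{n,n/2}$ lies in polynomial-size $\Sigma\Pi\Sigma^{[2]}$ but not in polynomial-size $\Sigma \cdot \vd_{\proj}$.

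There is no serious obstacle here; the theorem is essentially a corollary of the three preceding lemmas, and the only thing to be careful about is bookkeeping with the parameter $k = n/2$ (one should assume $n$ is even, or replace $n/2$ by $\lfloor n/2 \rfloor$ throughout with no change to the asymptotics) and making sure the fixed set $S$ of size $k$ is used consistently in the definition of $\ed$ for both the upper and lower bounds. The most delicate of the ingredients is really Lemma~\ref{lem:vd-ub} — the case analysis over how many coordinates in $S \cap \var(f)$ are set to $0$, $1$, or left as $*$ — but that lemma is already proved in the excerpt and may be invoked directly. So the proof of Theorem~\ref{thm:sym-vd-lb} itself is short: it is exactly the division $s \ge \ed(Sym_{n,n/2}) / \max_i \ed(f_i)$.
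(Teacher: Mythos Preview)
Your proposal is correct and matches the paper's approach exactly: the paper's proof is a one-line remark that the result follows from sub-additivity of $\ed$ (Lemma~\ref{lem:red-subadd}) together with Lemmas~\ref{lem:vd-ub} and~\ref{lem:sym-lb}. Your write-up simply spells out the arithmetic $s \ge (2^{n/2}-1)/(n/2+1)^2$ that the paper leaves implicit.
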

\begin{proof}
The proof is a straightforward application of sub-additivity of $\ed$  combined with Lemmas~\ref{lem:sym-lb} and \ref{lem:vd-ub}.
\end{proof}


\bibliographystyle{plain} 
\bibliography{refbib} 

\end{document}